\theoremstyle{plain}
\newtheorem*{theorem*}{Theorem}
\newtheorem{theorem}{Theorem}
\newtheorem*{remark*}{Remark}
\newtheorem{remark}{Remark}
\newtheorem*{corollary*}{Corollary}
\newtheorem{corollary}{Corollary}
\newtheorem{lemma}{Lemma}
\pgfplotsset{compat=1.18}
\crefname{appendix}{Appendix}{Appendices}
\newcommand{\oldbluemath}[1]{#1}
\newcommand{\eph}{e_\mathrm{ph}}
\newcommand{\affvqcc}{Vigo Quantum Communication Center, University of Vigo, Vigo E-{36310}, Spain}
\newcommand{\affuvigo}{Escuela de Ingeniería de Telecomunicación, Department of Signal Theory and Communications, University of Vigo, Vigo E-36310, Spain}
\newcommand{\affatlantic}{atlanTTic Research Center, University of Vigo, Vigo E-36310, Spain}
\newcommand{\afftoyama}{Faculty of Engineering, University of Toyama, Gofuku 3190, Toyama 930-8555, Japan}
\renewcommand\onecolumngrid{% <<<<<<
	\do@columngrid{one}{\@ne}%
	\def\set@footnotewidth{\onecolumngrid}% <<<<<<<<<<<<<<<<
	\def\footnoterule{\kern-6pt\hrule width 1.5in\kern6pt}%
}
\renewcommand\twocolumngrid{% <<<<<<
	\def\footnoterule{% restore rule
		\dimen@\skip\footins\divide\dimen@\thr@@
		\kern-\dimen@\hrule width.5in\kern\dimen@}
	\do@columngrid{mlt}{\tw@}
}%
\begin{document}

	\setlength{\parskip}{3pt}
	\setlength{\parindent}{0pt}
	
	\author{Guillermo Currás-Lorenzo}
	%\email{gcurras@vqcc.uvigo.es}
	\author{Margarida Pereira}
	\affiliation{\affvqcc} \affiliation{\affuvigo} \affiliation{\affatlantic} 
	\author{Kiyoshi Tamaki}
	\affiliation{\afftoyama} 
	\author{Marcos Curty}
	\affiliation{\affvqcc} \affiliation{\affuvigo} \affiliation{\affatlantic}
	
	\title{Rigorous phase-error-estimation security framework for QKD with correlated sources}
	%\date{\today}
	
	\setlength{\parskip}{3pt}
	\setlength{\parindent}{0pt}
	
	\begin{abstract}
		\textit{Abstract.}---Practical QKD modulators introduce correlations between consecutively emitted pulses due to bandwidth limitations, violating key assumptions underlying many security proof techniques. Here, we address this problem by introducing a simple yet powerful mathematical framework to directly extend phase-error-estimation-based security proofs for imperfect but uncorrelated sources to also incorporate encoding correlations. Our framework  overcomes important limitations of previous approaches in terms of generality and rigor, significantly narrowing the gap between theoretical security guarantees and real-world QKD implementations.
		
	\end{abstract}
	\maketitle
	
	\textit{Introduction.}---Quantum key distribution (QKD) promises information-theoretically secure communications by exploiting fundamental quantum mechanical principles. However, a central challenge in practical QKD is rigorously accounting for device imperfections that inevitably arise in real systems. While security proofs have been developed to handle various imperfections~\cite{gottesmanSecurityQuantum2004,fungSecurityProof2009,maroySecurityQuantum2010,tamakiLosstolerantQuantum2014,pereiraQuantumKey2020,curras-lorenzoSecurityFramework2025,tupkaryPhaseError2025,sixtoQuantumKey2025,kaminRenyiSecurity2025,marwahProvingSecurity2024,curras-lorenzoSecurityQuantum2025}, encoding correlations---which arise naturally from the limited bandwidth of optical modulators \cite{grunenfelderPerformanceSecurity2020,agulleiroModelingCharacterization2025} and cause each emitted quantum state to depend on setting choices from previous rounds---remain particularly challenging to incorporate within existing security frameworks. These correlations fundamentally violate key assumptions underlying several popular security proof techniques such as the Postselection Technique~\cite{christandlPostselectionTechnique2009,naharPostselectionTechnique2024}, Quantum de Finetti approaches~\cite{rennerSymmetryLarge2007} and the Marginal-constrained Entropy Accumulation Theorem (MEAT)~\cite{arqandMarginalconstrainedEntropy2025,kaminRenyiSecurity2025}\footnote{The Postselection Technique~\cite{christandlPostselectionTechnique2009,naharPostselectionTechnique2024} and approaches based on the Quantum de Finetti theorem~\cite{rennerSymmetryLarge2007} require the global protocol state to be permutation-invariant, enabling a reduction from general attacks to collective attacks. However, with encoding correlations, the temporal ordering of settings affects the physical state, breaking this symmetry. Similarly, the Marginal-constrained Entropy Accumulation Theorem (MEAT)~\cite{arqandMarginalconstrainedEntropy2025} models the protocol state as produced by a sequence of channels, each acting on an input state satisfying a fixed marginal constraint. With correlations, Alice's source state in round $k$ depends on all previous settings $j_1^{k-1}$, violating the required factorization structure of the source-replacement state.}, invalidating their application to realistic QKD setups that inevitably suffer from such correlations.
	
	% An important difficulty arises from the fact that , a property that several popular proof methods fundamentally require.
	
	On the other hand, security proofs based on phase-error estimation---including both proofs based on entropic uncertainty relations (EUR) with the leftover hashing lemma (LHL)~\cite{tomamichelUncertaintyRelation2011,tomamichelTightFinitekey2012,tomamichelLargelySelfcontained2017} and proofs based on phase-error correction (PEC)~\cite{koashiSimpleSecurity2009}---face no fundamental barriers to incorporating encoding correlations, as we rigorously establish in this work. Nevertheless, such correlations significantly complicate the phase-error estimation task, and were largely overlooked until a key conceptual breakthrough was introduced by Ref.~\cite{pereiraQuantumKey2020}. To illustrate the key idea, consider the simplest case of nearest-neighbor correlations. From the perspective of the $k$-th pulse, these correlations manifest in two ways: (i) the photonic system of round $k$ depends on the setting choice made in round $(k-1)$, and (ii) the setting choice in round $k$ affects the encoding of the photonic pulse in round $(k+1)$. The crucial insight of Ref.~\cite{pereiraQuantumKey2020} is that effect (i) resembles an encoding flaw, while effect (ii) is analogous to information leakage through a side channel system.%, such as e.g. the back-reflected light induced by a Trojan-horse attack. To understand the latter point, compare it with a Trojan-horse attack, where Eve injects light into the transmitter and setting-choice information leaks through the back-reflected light pulse. In both cases, information about the setting choice is carried by an additional photonic system---either the back-reflected pulse or the subsequent protocol pulse---which the eavesdropper can potentially exploit to gain additional setting choice information. 

	Building on this insight, Ref.~\cite{pereiraQuantumKey2020} (see also \cite{mizutaniSecurityRoundrobin2021,pereiraModifiedBB842023,curras-lorenzoSecurityFramework2025}) shows how security proofs capable of handling encoding flaws and side-channel leakage can be adapted to incorporate encoding correlations. The approach partitions rounds into $(l_c+1)$ groups according to $k \bmod (l_c+1)$, where $l_c$ is the maximum correlation length, and treats each group as an independent subprotocol. For instance, with nearest-neighbor correlations ($l_c=1$), rounds are split into even and odd groups, and a phase-error rate bound for the odd-rounds key is established by conditioning on fixed values of the even-rounds' settings. Then, these works argue that, since this bound holds for any fixed values of the even-rounds' settings, the resulting security proof for the odd-rounds key remains valid regardless of the value of the even-rounds key. Applying the same argument to the even-rounds key, the security of the full key then follows from composing the two individual proofs.

	%Building on this insight, Refs.~\cite{pereiraQuantumKey2020,mizutaniSecurityRoundrobin2021,pereiraModifiedBB842023,curras-lorenzoSecurityFramework2025} proposed a way to adapt security proofs capable of handling encoding flaws and side-channel leakage can be adapted to address encoding correlations.  The approach involves dividing the protocol rounds $k \in \{1,...,N\}$ into $(l_c+1)$ groups according to the value of $k \bmod (l_c+1)$, where $l_c$ denotes the maximum correlation length, and then essentially treating each group as an independent ``subprotocol'' with its own security proof. For instance, for nearest-neighbor correlations ($l_c=1$), the protocol rounds are divided into even and odd groups.  Then, when proving security of the even-round key, one conditions on any fixed values of all odd-round bit-and-basis choices, and derives a phase-error rate bound using the even-round observed data, which determines the amount of privacy amplification required for the even-round key. Crucially, since the phase-error bound is valid for any fixed values of the odd-round bit-and-basis choices, and the odd-round key is a function of the latter, the resulting security proof for the even-round key holds for any value of the odd-round key. Using similar arguments, one proves the security of the odd-round key conditional on any value of the even-round key, and the security of the full key then follows from the composition of these two security proofs.
	
	While this approach represents the only known method to incorporate encoding correlations into phase-error-estimation-based proofs to date, it suffers from several limitations that reduce its practical usefulness:
	\begin{enumerate}[label=(\alph*)]

		%\item \textit{Privacy amplification complexity}: The requirement to perform privacy amplification separately for each subkey increases the increases the implementation complexity. One must perform  $(l_c+1)$ separate privacy amplification procedures, manage the indexing and combination of multiple output keys, and so on, introducing additional implementation overheads and potential failure points.
		
		\item  \textit{Privacy amplification complexity}: It requires that privacy amplification is performed separately for the $(l_c+1)$ subkeys. This increases implementation complexity and introduces potential failure points.

		\item \textit{Composability concerns:} 
		Despite attempts to formalize the composability arguments needed to combine the security proofs for the individual subprotocols~\cite{mizutaniSecurityRoundrobin2021}, the validity of this composition remains contested, with a recent work explicitly labeling it a ``conjecture''~\cite{kaminRenyiSecurity2025}. This concern is also reflected in a recent review paper that classifies phase-error-estimation-based proofs as robust only against independent device imperfections \cite[Table III]{tupkaryQKDSecurity2025}.
		
		\item \textit{Restriction to finite correlation lengths:} The method inherently assumes a bounded correlation length $l_c$. While a subsequent work~\cite{pereiraQuantumKey2024a} shows that these proofs can be extended to unbounded correlations by introducing fictitious effective correlation lengths and adjusting security parameters to account for neglected long-range correlations, such extensions require arguments external to the phase-error security proof itself.  %Specifically, one must: (1) choose an effective maximum correlation length $l_c^\mathrm{eff}$ such that correlations beyond this length are negligible; (2) apply the security proof assuming the true correlation length equals $l_c^\mathrm{eff}$; (3) bound the trace distance between the actual global state and a hypothetical state without long-range correlations; and (4) increment the security parameter by twice this trace distance to account for the neglected correlations.
		
		\item \textit{Protocol specificity:} Previous works consider specific protocols and security proofs on a case-by-case basis, without providing a fully general mathematical framework to address correlations across protocols.

	\end{enumerate}
	
	In this work, we overcome these challenges by constructing a rigorous and general framework to extend phase-error-estimation-based security proofs to correlated sources in a systematic way, addressing Limitation~(d). In doing so, we rigorously establish that phase-error-rate bounds for individual partitions can be directly combined into a single bound on the overall phase-error rate of the full sifted key. Thus, by applying our framework, one can achieve security through a single privacy amplification step on the full key, eliminating Limitation~(a) and circumventing the composability arguments of Limitation~(b). Furthermore, our framework incorporates unbounded correlations directly within the phase-error proof, accounting for long-range correlations through a slight increase in the failure probability of the phase-error-rate bound, thus addressing Limitation~(c) as well.
	
	For concreteness, we focus our presentation on prepare-and-measure protocols, i.e., protocols in which Alice sends states to Bob. However, our framework can also be applied to address encoding correlations in interference-based protocols, i.e., protocols in which Alice and Bob send states to an untrusted middle node Charlie.

	%To address these fundamental challenges, in this work, we rigorously prove that bounds on the phase-error rates of individual subkeys can be directly combined to obtain a bound on the overall phase-error rate of the full sifted key. This allows us to establish a rigorous general framework to extend phase-error-based security analyses capable of handling encoding flaws and side-channel leakage to also incorporate encoding correlations by simply applying its phase-error estimation procedure separately to the $(l_c+1)$ groups of rounds. Critically, unlike the previous approach, our framework can establish security through a single privacy amplification step applied to the full sifted key, thus eliminating the need for separate privacy amplification procedures (Limitation (a)) and avoiding composability arguments (Limitation (b)). Our framework can also incorporate correlations  of unbounded length within the phase-error security proof itself by considering an effective maximum correlation length $l_c^\mathrm{eff}$ and then rigorously accounting for the neglected long-range correlations by slightly increasing the failure probability of the phase-error rate bound, thus overcoming Limitation (c) as well.
	
	\textit{Source replacement scheme with correlated sources.}---
	Consider a general prepare-and-measure QKD protocol where, in round $k \in \{1,...,N\}$, Alice selects setting $j_k \in \mathcal{J}$ with probability $p_{j_k}$ and emits a state $\ket*{\psi^{(k)}_{j_1^k}}_{T_k}$, where $j_1^k \coloneqq j_1 ... j_k$. Due to encoding correlations, this state depends not only on the current setting $j_k$, but also on the history of previous settings  $j_1^{k-1}$. Alice's state preparation is equivalent to generating the global source-replacement state
	\begin{equation}
		\label{eq:source_replacement_correlated}
		\ket*{\Psi_N}_{A_1^N T_1^N} = \sum_{j_1^N} \bigotimes_k \sqrt{p_{j_k}} \ket*{j_k}_{A_k} \ket*{\psi^{(k)}_{j_1^k}}_{T_k},	\end{equation}
	and then measuring systems $A_1^N \coloneqq A_1 ... A_N$ in the computational basis $\{\ket{j_k}_{A_k}\}_{j_k \in \mathcal{J}}$, while sending systems $T_1^N$ through the channel. 
	
	Uncorrelated sources correspond to the special case where $\ket*{\psi^{(k)}_{j_1^k}}_{T_k} \equiv \ket*{\psi^{(k)}_{j_k}}_{T_k}$. In this case, the global state factorizes as
	\begin{equation}
		\label{eq:source_replacement_uncorrelated}
		\ket*{\Psi_N}_{A_1^N T_1^N} =  \bigotimes_k \sum_{j_k} \sqrt{p_{j_k}} \ket*{j_k}_{A_k} \ket*{\psi^{(k)}_{j_k}}_{T_k}.
	\end{equation}

	\textit{Phase-error estimation with correlated sources.}---Security proofs based on phase-error estimation follow a specific approach. Using the source-replacement state, one first defines a scenario equivalent to the actual protocol in which Alice and Bob initially determine \textit{which} rounds will be used to generate the sifted key, and only later extract the actual key bits. In this extraction phase, Alice measures qubit systems in the computational basis $\{\ket*{0},\ket*{1}\}$, while Bob performs a two-outcome POVM $\{G_0,G_1\}$ on his sifted-key systems.
	
	Then, one considers a fictitious \textit{phase-error estimation protocol} in which Alice instead measures her qubits in the $\{\ket*{+},\ket*{-}\}$ basis, where $\ket*{\pm} = (\ket*{0} \pm \ket*{1})/\sqrt{2}$, and Bob uses his side information to attempt to predict her outcomes. The phase-error rate $\bm{e_\mathrm{ph}}$ is defined as the fraction of incorrect predictions, and bounding this quantity is the key to establishing security of the actual protocol. Specifically, one must prove that
	\begin{equation}
		\label{eq:general_eph_bound}
		\Pr[\bm{e_\mathrm{ph}} > \mathcal{E}_\mathrm{ph} (\bm{\vec n};N,\epsilon)] \leq \epsilon,
	\end{equation}
	where $\bm{\vec n}$ is a random vector representing the announced data  (before post-processing), $\epsilon$ is the bound's failure probability, $\mathcal{E}_\mathrm{ph}$ is a function relating these quantities, and the bound is established for any value of the total number of transmitted rounds $N$. Note that we use the convention that bold variables represent random variables.
	
	For uncorrelated sources, it is well established that such a bound is enough to guarantee security via either EUR+LHL~\cite{tomamichelUncertaintyRelation2011,tomamichelTightFinitekey2012,tomamichelLargelySelfcontained2017} or by using PEC arguments~\cite{koashiSimpleSecurity2009}, even when the final key is of variable length~\cite{curras-lorenzoTightFinitekey2021,tupkaryPhaseError2025,hayashiConciseTight2012,kawakamiSecurity}. In particular, under the EUR+LHL framework, one obtains a final key of length~\cite{tupkaryPhaseError2025}
	\begin{equation}
		\begin{aligned}
			\bm{l} &= \bm{n_K} \big[1-h\big(\mathcal{E}_\mathrm{ph} (\bm{\vec n};N,\epsilon)\big)\big] - \lambda_\mathrm{EC}(\bm{\vec n}) \\
			&- 2 \log_2\big(1/2\varepsilon_\mathrm{PA})-\log_2(2/\varepsilon_\mathrm{EV}),
		\end{aligned}
		\label{eq:sklength}
	\end{equation}
	with security parameter $(\varepsilon_\mathrm{corr}+\varepsilon_\mathrm{sec})$, where $\varepsilon_\mathrm{corr} = \varepsilon_\mathrm{EV}$ and $\varepsilon_\mathrm{sec} = 2 \sqrt{\epsilon} + \varepsilon_\mathrm{PA}$. Here, $\bm{n_K}$ is the sifted key length, $\varepsilon_\mathrm{EV}$ is the error verification failure probability, $\varepsilon_\mathrm{PA}>0$ is freely chosen, and $\lambda_\mathrm{EC}$ is a function of $\bm{\vec n}$ that determines the number of bits leaked during error correction.
	
	In this work, we rigorously establish that a \textit{phase-error estimation protocol} defined for the uncorrelated source-replacement state in \cref{eq:source_replacement_uncorrelated} remains equally valid when considering the correlated source-replacement state in \cref{eq:source_replacement_correlated}. Consequently, proving a bound of the form in \cref{eq:general_eph_bound} suffices to guarantee security even with correlated sources (see Appendix~\ref{app:sec_framework_eph} for the detailed construction and proof). This is significant because it demonstrates that the fundamental security framework---deriving secure key lengths from phase-error-rate bounds via EUR+LHL or PEC---applies naturally to correlated sources. The challenge thus reduces to deriving phase-error rate upper bounds that account for correlations. Our main contribution below addresses this challenge by providing a general method to extend phase-error rate upper bounds from uncorrelated to correlated sources. For the proof of the technical results below, as well as the full statement of our framework, see \cref{app:main_results}.

	\begin{corollary}
		\label{cor:main}
		
		Consider a prepare-and-measure QKD protocol with an uncorrelated source. In each round $k$, the source is characterized by a family of states $\{\ket*{\psi_{j_k}^{(k)}}_{T_k}\}_{j_k \in \mathcal{J}}$ indexed by the setting $j_k \in \mathcal{J}$. Suppose there exists an admissibility set $\mathcal{S}$ of families of single-round states such that the phase-error bound in \cref{eq:general_eph_bound} is guaranteed to hold as long as the source satisfies
		\begin{equation}
			\{\ket*{\psi^{(k)}_{j_k}}_{T_k}\}_{j_k \in \mathcal{J}} \in \mathcal{S}, \quad \forall k.
		\end{equation}

		Now consider the analogous protocol with a source exhibiting correlations up to length $l_c$, where in round $k$ Alice emits a state $\ket*{\psi^{(k)}_{j_{k-l_c}^k}}_{T_k}$ that depends on the setting history up to $l_c$ rounds ago. For any sequence of settings $j_{k-l_c}^{k+l_c}$ (with the convention that indices outside $\{1,...,N\}$ are truncated appropriately) define the joint state emitted in rounds $k$ to $k+l_c$ as
		\begin{equation}
			\label{eq:multi_round_correlated_states_def}
			\ket*{\Psi_{j_{k-l_c}^{k+l_c}}^{(k)}}_{T_k^{k+l_c}} = \bigotimes_{m = k}^{k+l_c} \ket*{\psi_{j_{m-l_c}^m}^{(m)}}_{T_m}.
		\end{equation}
		
		Suppose that, for every round $k$ and every fixed choice of past and future settings $(j_{k-l_c}^{k-1},j_{k+1}^{k+l_c})$, the family $\{\ket*{\Psi_{j_{k-l_c}^{k+l_c}}^{(k)}}_{T_k^{k+l_c}}\}_{j_k \in \mathcal{J}}$ obtained by varying $j_k$ is isometrically equivalent to (i.e., has the same Gram matrix as) an acceptable family of single-round states $\{\ket*{\varphi_{j}^{(j_{k-l_c}^{k-1},j_{k+1}^{k+l_c})}}_{T_k}\}_{j \in \mathcal{J}} \in \mathcal{S}$.

		Then, partitioning the rounds $k \in \{1,...,N\}$ into $(l_c+1)$ sets $I_w = \{k : k \equiv w \bmod{(l_c+1)}\}$ with $w=0,...,l_c$, the phase-error rate in the correlated scenario satisfies
		\begin{equation}
			\label{eq:general_eph_bound_correl}
			\Pr[\bm\eph > \frac{\sum_{w=0}^{l_c} \bm{n_K^{(w)}} \mathcal{E}_{\mathrm{ph}}  (\bm{\vec n^{(w)}};N^{(w)},\epsilon)}{\bm{n_K}}]  \leq (l_c+1)\epsilon,
		\end{equation}
		where $\bm{\vec n^{(w)}}$ is the restriction of the announced data vector $\bm{\vec n}$ to rounds in $I_w$, $N^{(w)} = \abs{I_w}$, $\bm{n_K^{(w)}}$ is the number of sifted key bits from rounds in $I_w$, and $\bm{n_K}=\sum_{w=0}^{l_c} \bm{n_K^{(w)}}$.
	\end{corollary}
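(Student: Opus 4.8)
The plan is to exploit the partition $\{I_w\}_{w=0}^{l_c}$ so that, after conditioning on the settings of all rounds \emph{outside} a given group, the rounds within that group become effectively uncorrelated and fall within the admissibility set $\mathcal{S}$, enabling a group-by-group application of the uncorrelated bound \cref{eq:general_eph_bound} followed by a clean recombination. First I would record the elementary geometric fact that any two distinct rounds $k,k' \in I_w$ satisfy $\abs{k-k'}\geq l_c+1$. Consequently, the joint-state system $T_k^{k+l_c}$ attached to round $k$ in \cref{eq:multi_round_correlated_states_def} and the system $T_{k'}^{k'+l_c}$ attached to $k'$ are disjoint, and the setting $j_k$ lies outside the dependency window $[k'-l_c,k'+l_c]$ of round $k'$ (and vice versa). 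Hence, once all settings belonging to rounds outside $I_w$ are fixed, the collection $\{\ket*{\Psi^{(k)}_{j_{k-l_c}^{k+l_c}}}_{T_k^{k+l_c}}\}_{k \in I_w}$ depends, round by round, only on the ``own'' setting $j_k$, with the factors living on mutually disjoint systems---exactly the tensor-product structure of an uncorrelated source as in \cref{eq:source_replacement_uncorrelated}.

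Next I would invoke the isometric-equivalence hypothesis together with the fact that phase-error estimation is invariant under isometries applied to the transmitted systems: both the statistics of the announced data and the phase-error count depend on the emitted states only through their Gram matrix, since any such isometry can be absorbed into Eve's channel and Bob's POVM without altering any observable distribution. Thus, for each fixed assignment $s$ of the outside-group settings, a product isometry $\bigotimes_{k \in I_w} V_k$ maps an uncorrelated family of acceptable single-round states $\{\ket*{\varphi^{(\ldots)}_{j}}_{T_k}\}_{j}\in\mathcal{S}$ onto the conditional correlated family. The conditional phase-error estimation protocol for group $I_w$ is therefore statistically identical to a genuine uncorrelated protocol whose per-round sources all lie in $\mathcal{S}$, so \cref{eq:general_eph_bound} applies with $N=N^{(w)}$ and announced data $\bm{\vec n^{(w)}}$, giving
\[
\Pr[\bm{e^{(w)}_\mathrm{ph}} > \mathcal{E}_\mathrm{ph}(\bm{\vec n^{(w)}};N^{(w)},\epsilon) \mid s]\leq\epsilon,
\]
where $\bm{e^{(w)}_\mathrm{ph}}$ is the phase-error rate restricted to $I_w$. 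Averaging over the actual (classical) distribution of $s$ removes the conditioning and yields the same bound unconditionally for each $w$.

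Finally I would recombine the groups. Writing $B_w=\{\bm{e^{(w)}_\mathrm{ph}}>\mathcal{E}_\mathrm{ph}(\bm{\vec n^{(w)}};N^{(w)},\epsilon)\}$, the union bound gives $\Pr[\bigcup_w B_w]\leq(l_c+1)\epsilon$. On the complementary event the number of phase errors in group $w$ is at most $\bm{n_K^{(w)}}\,\mathcal{E}_\mathrm{ph}(\bm{\vec n^{(w)}};N^{(w)},\epsilon)$; since the total number of phase errors is the sum over groups and $\bm{n_K}=\sum_w \bm{n_K^{(w)}}$, the overall rate obeys $\bm\eph \leq \big(\sum_w \bm{n_K^{(w)}}\,\mathcal{E}_\mathrm{ph}(\bm{\vec n^{(w)}};N^{(w)},\epsilon)\big)/\bm{n_K}$, which is exactly the complement of the event in \cref{eq:general_eph_bound_correl}. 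This establishes the claimed failure probability $(l_c+1)\epsilon$.

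The step I expect to be the main obstacle is the rigorous passage from per-group conditional bounds to a single bound on the full-key phase-error rate. This is precisely the point targeted by the composability critiques of prior work (Limitation~(b)): earlier approaches combined \emph{separate keys} after independent privacy amplification, which required delicate composability arguments. The key conceptual move here is to combine at the level of phase \emph{errors}---before any privacy amplification---so that the recombination reduces to the arithmetic identity ``total errors $=$ sum of group errors'' together with a single union bound. Making this airtight requires verifying that the conditioning on outside-group settings is genuinely classical (so that averaging is valid) and that the isometric reduction preserves the joint distribution of the announced data over all rounds of $I_w$ simultaneously, not merely marginally; the disjointness of the per-round systems within $I_w$ established in the first step is what guarantees this.
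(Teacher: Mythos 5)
Your proposal is correct and follows essentially the same route as the paper's own proof (\cref{thm:main} combined with \cref{cor:per_round}): exploiting the $\geq l_c+1$ spacing within each $I_w$, conditioning on the complementary-group settings, absorbing the isometry into Eve's channel to recover an admissible uncorrelated source, applying the uncorrelated bound per group, removing the conditioning via the law of total probability, and recombining through the weighted-average identity plus a union bound. The obstacle you flag at the end---making the conditioning on outside-group settings meaningful inside the phase-error estimation protocol---is precisely what the paper resolves by introducing the $w$-th phase-error estimation protocol (in which rounds in $I_{\bar w}$ receive computational-basis measurements) and noting that the joint distribution of $(\bm{\eph^{(w)}},\bm{\vec n^{(w)}})$ is unchanged relative to the full protocol, since measurements on the complementary systems $A_{I_{\bar w}}B_{I_{\bar w}}$ cannot affect that marginal.
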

	
	\textit{Interpretation.}---\cref{cor:main} essentially says that, as long as the family of multi-round correlated states $\{\ket*{\Psi_{j_{k-l_c}^{k+l_c}}^{(k)}}_{T_k^{k+l_c}}\}_{j_k \in \mathcal{J}}$ satisfies the single-round conditions of the original uncorrelated proof (up to an isometry), then one can divide the protocol rounds into $(l_c+1)$ partitions, apply the uncorrelated phase-error estimation formula to upper bound the phase-error rate of each partition, and then take the weighted average to obtain an upper bound on the overall phase-error rate, which can then be used to determine the length and secrecy parameters of the final key via \cref{eq:sklength}. 
	
	Note that, for some correlation models, the Gram matrix of the family $\{\ket*{\Psi_{j_{k-l_c}^{k+l_c}}^{(k)}}_{T_k^{k+l_c}}\}_{j_k \in \mathcal{J}}$ may depend on $(j_{k-l_c}^{k-1},j_{k+1}^{k+l_c})$ and/or the round $k$. For such models, the admissibility set $\mathcal{S}$ cannot consist of a single family $\{\ket*{\varphi_j}\}_{j \in \mathcal{J}}$, i.e., the original security proof should consider some form of partial state characterization. A common approach, employed in Refs.~\cite{curras-lorenzoSecurityFramework2025,pereiraQuantumKey2019,pereiraQuantumKey2020,navarretePracticalQuantum2021,pereiraModifiedBB842023}, is to require fidelity bounds to reference states, a condition originally developed to handle information leakage through hard-to-characterize degrees of freedom such as mode dependencies or Trojan-horse attacks. In the following corollary, we show how our framework extends such security proofs to also incorporate encoding correlations.
	%In particular, fidelity bounds to reference states have been employed in security proofs to incorporate information leakage through additional degrees of freedom (such as mode dependencies or Trojan-horse attacks) that are difficult to fully characterize; see Ref.~\cite{curras-lorenzoSecurityFramework2025} and essentially equivalent conditions in Refs.~\cite{pereiraQuantumKey2019,pereiraQuantumKey2020,navarretePracticalQuantum2021,pereiraModifiedBB842023}. In the following corollary, we show how our framework extends security proofs based on fidelity bounds, which already handle side-channel leakage, to also incorporate encoding correlations. 
	For other examples of admissibility sets and protocols to which our results apply, see End Matter.
	
	\begin{corollary}[Fidelity bound to reference states]
		\label{cor:fidelity bound}
		Consider a prepare-and-measure QKD protocol with an uncorrelated source, and suppose there exists a set of reference states $\{\ket*{\phi_{j}}\}_{j \in \mathcal{J}}$ such that the phase-error bound in~\cref{eq:general_eph_bound} holds as long as
		\begin{equation}
			\abs*{\braket*{\phi_{j_k}}{\psi_{j_k}^{(k)}}_{T_k}}^2 \geq 1-\xi_{j_k}, \quad \forall k, \forall j_k \in \mathcal{J}.
			\label{eq:epsilon_side}
		\end{equation}

		For an analogous protocol with a source with correlations up to length $l_c$, suppose that for every round $k$ and every choice of past and future settings $(j_{k-l_c}^{k-1},j_{k+1}^{k+l_c})$, there exist a family of states $\{\ket*{\phi_{j_{k-l_c}^{k+l_c}}^{(k)}}_{T_k}\}_{j_k \in \mathcal{J}}$ with the same Gram matrix as the family of reference states $\{\ket*{\phi_{j}}\}_j$ and a state $\ket*{\lambda_{j_{k-l_c}^{k-1},j_{k+1}^{k+l_c}}^{(k)}}_{T_{k+1}^{k+l_c}}$ independent of $j_k$ such that
		\begin{equation}
			\begin{aligned}
				\label{eq:correlations_general}
				&\abs{\bra*{\phi_{j_{k-l_c}^{k+l_c}}^{(k)}}_{T_k} \otimes \bra*{\lambda_{j_{k-l_c}^{k-1},j_{k+1}^{k+l_c}}^{(k)}}_{T_{k+1}^{k+l_c}} \ket*{\Psi_{j_{k-l_c}^{k+l_c}}^{(k)}}_{T_k^{k+l_c}}}^2 \\
				&\geq 1-\xi_{j_k}, \quad \forall j_k.
			\end{aligned}
		\end{equation}
		
		Then, the phase-error rate bound in~\cref{eq:general_eph_bound_correl} holds for this correlated scenario.
	\end{corollary}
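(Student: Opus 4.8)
The plan is to recognize \cref{cor:fidelity bound} as an instantiation of \cref{cor:main}: it suffices to verify the single hypothesis of \cref{cor:main} for the admissibility set $\mathcal{S}$ defined by the fidelity condition \cref{eq:epsilon_side}. Concretely, for every round $k$ and every fixed choice of past and future settings $(j_{k-l_c}^{k-1},j_{k+1}^{k+l_c})$, I would construct an \emph{acceptable} single-round family in $\mathcal{S}$ that has the same Gram matrix as the multi-round correlated family $\{\ket*{\Psi_{j_{k-l_c}^{k+l_c}}^{(k)}}_{T_k^{k+l_c}}\}_{j_k}$; once this is done, \cref{cor:main} applies directly and delivers \cref{eq:general_eph_bound_correl}. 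Throughout I fix $k$ and the past/future settings and suppress them from the notation, writing $\ket*{\Psi_{j_k}}$, $\ket*{\phi^{(k)}_{j_k}}$ and $\ket*{\lambda}$ for the three objects appearing in \cref{eq:correlations_general}.

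First I would form the product reference states $\ket*{\tilde\phi_{j_k}}_{T_k^{k+l_c}} \coloneqq \ket*{\phi^{(k)}_{j_k}}_{T_k} \otimes \ket*{\lambda}_{T_{k+1}^{k+l_c}}$. Because $\ket*{\lambda}$ is normalized and independent of $j_k$, the overlaps factor as $\braket*{\tilde\phi_{j_k}}{\tilde\phi_{j_k'}} = \braket*{\phi^{(k)}_{j_k}}{\phi^{(k)}_{j_k'}}$, so $\{\ket*{\tilde\phi_{j_k}}\}_{j_k}$ inherits the Gram matrix of $\{\ket*{\phi^{(k)}_{j_k}}\}_{j_k}$, which by hypothesis coincides with that of the reference family $\{\ket*{\phi_{j}}\}_{j}$. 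Two families sharing a Gram matrix are related by an isometry, so there is a linear isometry $V_0$ defined on $\mathrm{span}\{\ket*{\tilde\phi_{j_k}}\}_{j_k}$ and mapping into $\mathcal{H}_{T_k}$ (taken large enough, e.g.\ a Fock space) with $V_0\ket*{\tilde\phi_{j_k}} = \ket*{\phi_{j_k}}$ for all $j_k$. I would then extend $V_0$ to an isometry $V$ on the finite-dimensional subspace $\mathrm{span}\{\ket*{\tilde\phi_{j_k}},\ket*{\Psi_{j_k}}\}_{j_k}$, which is always possible since $\mathcal{H}_{T_k}$ has dimension at least that of this subspace.

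Next I define the proxy single-round family $\ket*{\varphi_{j_k}}_{T_k} \coloneqq V \ket*{\Psi_{j_k}}$. Since $V$ preserves all inner products on its domain, $\braket*{\varphi_{j_k}}{\varphi_{j_k'}} = \braket*{\Psi_{j_k}}{\Psi_{j_k'}}$, so $\{\ket*{\varphi_{j_k}}\}_{j_k}$ is isometrically equivalent to the multi-round correlated family. Moreover, using $\ket*{\phi_{j_k}} = V\ket*{\tilde\phi_{j_k}}$ and preservation of the cross overlaps, I obtain $\braket*{\phi_{j_k}}{\varphi_{j_k}} = \braket*{\tilde\phi_{j_k}}{\Psi_{j_k}}$, whence $\abs*{\braket*{\phi_{j_k}}{\varphi_{j_k}}}^2 = \abs*{\braket*{\tilde\phi_{j_k}}{\Psi_{j_k}}}^2 \geq 1-\xi_{j_k}$ by \cref{eq:correlations_general}. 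Thus $\{\ket*{\varphi_{j_k}}\}_{j_k}$ satisfies \cref{eq:epsilon_side} and lies in $\mathcal{S}$. As this construction applies for every $k$ and every fixed past/future history, the hypothesis of \cref{cor:main} is verified and \cref{eq:general_eph_bound_correl} follows.

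The step I expect to require the most care is ensuring that the proxy family genuinely belongs to $\mathcal{S}$, rather than merely being a family with the correct overlaps: one must exhibit a single isometry $V$ that simultaneously sends the product references \emph{exactly} onto the fixed reference states $\ket*{\phi_{j_k}}$ defining $\mathcal{S}$ while carrying the correlated states into the same single-round space $T_k$. This is precisely where the $j_k$-independence of $\ket*{\lambda}$ is indispensable—any dependence on $j_k$ would destroy the equality of Gram matrices between $\{\ket*{\tilde\phi_{j_k}}\}_{j_k}$ and $\{\ket*{\phi_j}\}_j$ and hence obstruct the existence of $V_0$. A secondary point worth checking is that collapsing the $(l_c+1)$-round system $T_k^{k+l_c}$ onto the single-round system $T_k$ of possibly larger dimension is harmless, since the uncorrelated phase-error proof underlying $\mathcal{S}$ is assumed to hold for a photonic system of arbitrary (in particular unbounded) dimension.
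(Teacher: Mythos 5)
Your proof is correct, and at the top level it follows the same route as the paper's: both arguments reduce \cref{cor:fidelity bound} to the per-round admissibility result (\cref{cor:main}; \cref{cor:per_round} in the appendix) by combining the Gram-matrix equality between $\{\ket*{\phi_{j_{k-l_c}^{k+l_c}}^{(k)}}\}_{j_k}$ and the reference family $\{\ket*{\phi_j}\}_j$ with the $j_k$-independence of $\ket*{\lambda}$, so as to exhibit an isometric identification between the multi-round family $\{\ket*{\Psi^{(k)}_{j_{k-l_c}^{k+l_c}}}\}_{j_k}$ and a single-round family in the fidelity-defined admissibility set $\mathcal{S}$. Where you differ is the \emph{direction} of the isometry, and the difference is worth noting. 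The paper builds $V:\mathcal{H}_{T_k}\to\mathcal{H}_{T_k^{k+l_c}}$ with $V\ket*{\phi_{j_k}}=\ket*{\phi_{j_{k-l_c}^{k+l_c}}^{(k)}}\otimes\ket*{\lambda}$ and takes the pulled-back states $V^{\dagger}\ket*{\Psi^{(k)}_{j_{k-l_c}^{k+l_c}}}$ as the acceptable single-round family; you instead define an isometry on $\mathrm{span}\{\ket*{\tilde\phi_{j_k}},\ket*{\Psi_{j_k}}\}_{j_k}\subseteq\mathcal{H}_{T_k^{k+l_c}}$ sending the product references exactly onto $\ket*{\phi_{j_k}}$, and take the pushed-forward states $V\ket*{\Psi_{j_k}}$ as the acceptable family. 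Your direction is technically cleaner: since $\ket*{\Psi_{j_k}}$ generally does not lie in the range of the paper's $V$ (its $T_{k+1}^{k+l_c}$ component need not be proportional to $\ket*{\lambda}$), the pullbacks $V^{\dagger}\ket*{\Psi_{j_k}}$ are strictly subnormalized and their Gram matrix need not coincide with that of $\{\ket*{\Psi_{j_k}}\}_{j_k}$, so exact membership in $\mathcal{S}$ (whose elements are normalized families, as the proof of \cref{thm:main} requires) holds only up to this subnormalization; your pushforward preserves norms and all inner products exactly, so it verifies the hypothesis of \cref{cor:main} as literally stated, with the fidelity bound following from $\braket*{\phi_{j_k}}{\varphi_{j_k}}=\braket*{\tilde\phi_{j_k}}{\Psi_{j_k}}$ and \cref{eq:correlations_general}. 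The one caveat in your construction---that $\mathcal{H}_{T_k}$ must have dimension at least $\dim\mathrm{span}\{\ket*{\tilde\phi_{j_k}},\ket*{\Psi_{j_k}}\}_{j_k}$ so that $V_0$ can be extended---is mild, is explicitly flagged by you, and is equally implicit in the paper's treatment.
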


	\textit{Example: LTI correlations.}---As a concrete application of our framework, we consider correlations arising from modeling a BB84 phase modulator as a linear time-invariant (LTI) system~\cite{agulleiroModelingCharacterization2025}. Due to linearity, the encoding phase for the $k$-th pulse conditioned on the full setting history $j_1^k$ decomposes as
	\begin{equation}
		\label{eq:theta_j1k}
		\theta_{j_1^k} = \hat\theta_{j_k} + \sum_{l=1}^{k-1} \delta^{(l)}_{j_{k-l}}.
	\end{equation}
	Here, $\hat\theta_{j_k}$ is the phase that would be encoded if the system had been in its baseline state prior to the encoding (which may still deviate from the ideal BB84 phase due to the imperfect response of the modulator) and $\delta^{(l)}_{j_{k-l}}$ represents the residual contribution from setting $j_{k-l}$ chosen $l$ rounds earlier. Assuming for simplicity no encoding side channels beyond correlations (though our framework can also incorporate them), the emitted state in round $k$ is
	\begin{equation}
		\label{eq:psi_j1k}
		\ket*{\psi_{j_1^k}^{(k)}}_{T_k} = \cos\ (\theta_{j_1^k}) \ket{0}_{T_k} + \sin\ (\theta_{j_1^k}) \ket{1}_{T_k}.
	\end{equation}
	As shown in Ref.~\cite[Appendix~D]{agulleiroModelingCharacterization2025}, one can experimentally obtain an exponential bound on the correlation strength, i.e., 
	\begin{equation}
		\label{eq:delta_bound}
		\abs{\delta^{(l)}_{j} - \delta^{(l)}_{j'}} \leq  \sqrt{\xi_l}, \quad \text{with} \quad
		\xi_l = \xi_1 e^{-C (l-1)},
	\end{equation}
	where $j,j' \in \mathcal{J}$, $\xi_1$ is the nearest-neighbor correlation strength and $C > 0$ is a decay constant determined by the modulator's impulse response.
	
	As a first step, consider for now an idealized scenario where correlations vanish beyond some length $l_c$, i.e., $\xi_l = 0$ for $l > l_c$. The family of correlated states $\{\ket*{\psi_{j_{k-l_c}^k}^{(k)}}_{T_k}\}_{j_k}$ has a Gram matrix independent of both $k$ and the past settings $j_{k-l_c}^{k-1}$, matching that of the reference family $\{\ket{\phi_{j}}\}_{j}$ with
	\begin{equation}
		\label{eq:phi_j_def}
		\ket{\phi_j} = \cos\  (\hat\theta_{j}) \ket{0} + \sin\ (\hat\theta_{j}) \ket{1}.
	\end{equation}
	To apply \cref{cor:fidelity bound}, we identify $\ket*{\phi_{j_{k-l_c}^{k+l_c}}^{(k)}}_{T_k} \mapsto \ket*{\psi^{(k)}_{j_{k-l_c}^{k}}}_{T_k}$ and
	% \begin{equation}
		%     \ket*{\lambda_{j_{k-l_c}^{k-1},j_{k+1}^{k+l_c}}^{(k)}}_{T_{k+1}^{k+l_c}} \mapsto \bigotimes_{m=k+1}^{k+l_c} \ket*{\psi^{(m)}_{j_{m-l_c}^m}}_{T_m}\bigg|_{j_k = j^*},
		% \end{equation}
	\begin{equation}
		\label{eq:lambda_def}
		\ket*{\lambda_{j_{k-l_c}^{k-1},j_{k+1}^{k+l_c}}^{(k)}}_{T_{k+1}^{k+l_c}} \mapsto \bigotimes_{m=k+1}^{k+l_c} \ket*{\psi^{(m)}_{j_{m-l_c}^{k-1},\, j^*,\, j_{k+1}^{m}}}_{T_m},		\end{equation}
	where $j^* \in \mathcal{J}$ is an arbitrary fixed setting for the $k$-th pulse.  A straightforward calculation (see End Matter) then shows that \cref{eq:correlations_general} holds with $\xi_{j_k} = \xi$, $\forall j_k$, where
	\begin{equation}
		\label{eq:xi_jk_bound}
		\xi \coloneqq \sum_{l=1}^{l_c} \xi_l = \frac{\xi_1 (1 - e^{-C l_c})}{1 - e^{-C}}.
	\end{equation}
	Thus, for finite-length correlations, one could directly apply \cref{cor:fidelity bound} to extend an uncorrelated proof that accommodates fidelity bounds of the form in \cref{eq:epsilon_side} to the reference states in \cref{eq:phi_j_def}, such as the proof in Ref.~\cite{curras-lorenzoSecurityFramework2025}.
	
	Of course, real sources may exhibit correlations of unbounded length, even if their strength may decay exponentially. We can handle this by simply applying the following lemma:
	\begin{lemma}
		\label{lem:trace_distance_bound}
		Let $\ket*{\Psi_N^{(\infty)}}_{A_1^N T_1^N}$ be the source-replacement state for a source with unbounded correlations, and let $\ket*{\Psi_N^{(l_c)}}_{A_1^N T_1^N}$ be the corresponding state for a fictitious source with correlations truncated at length $l_c$. If the trace distance between these two states satisfies
		\begin{equation}
			T\Big(\ketbra*{\Psi_N^{(\infty)}},\ketbra*{\Psi_N^{(l_c)}}\Big) \leq d,
			\label{eq:def_d}
		\end{equation}
		and the phase-error bound in~\cref{eq:general_eph_bound} holds for the truncated source with failure probability $\epsilon$, then the same bound holds for the actual source with failure probability $\epsilon + d$.
	\end{lemma}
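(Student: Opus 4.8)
The plan is to exploit the fact that the phase-error bound in \cref{eq:general_eph_bound} is, at its core, a statement about the probability of a single event defined on the measurement statistics, and that trace distance controls exactly how much such probabilities can differ between two candidate input states. The whole proof is thus an application of the data-processing inequality for trace distance.

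First I would fix an arbitrary eavesdropping strategy and observe that the entire phase-error estimation procedure --- sending the systems $T_1^N$ through Eve's channel, having Alice measure her systems $A_1^N$ in the $\{\ket{+},\ket{-}\}$ basis, letting Bob apply his prediction POVMs, and performing all classical announcements and post-processing --- can be described by a single completely positive trace-preserving (CPTP) map $\mathcal{M}$ acting on the source-replacement state. The output of $\mathcal{M}$ is a classical register recording the realized values of the announced data $\bm{\vec n}$ and of the phase-error rate $\bm{\eph}$. Crucially, this map depends only on the protocol and on the fixed attack, not on which source-replacement state is fed in: the truncated state $\ket*{\Psi_N^{(l_c)}}$ and the true state $\ket*{\Psi_N^{(\infty)}}$ are processed by the very same $\mathcal{M}$. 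I would then define the failure event $E \coloneqq \{\bm{\eph} > \mathcal{E}_\mathrm{ph}(\bm{\vec n};N,\epsilon)\}$; since the threshold function $\mathcal{E}_\mathrm{ph}$ depends only on the announced data and the fixed parameters $N,\epsilon$, the indicator of $E$ is a fixed deterministic function of the classical output of $\mathcal{M}$, and hence $E$ is the same measurable event in both scenarios, obtained by further coarse-graining $\mathcal{M}$'s output with an identical map.

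The key step is the monotonicity of trace distance under CPTP maps: because $\mathcal{M}$ followed by the coarse-graining to the indicator of $E$ is applied identically to both states, the distributions it produces satisfy
\begin{equation}
\abs{\Pr_{\Psi_N^{(\infty)}}[E] - \Pr_{\Psi_N^{(l_c)}}[E]} \leq T\Big(\ketbra*{\Psi_N^{(\infty)}},\ketbra*{\Psi_N^{(l_c)}}\Big) \leq d,
\end{equation}
where the last inequality is the hypothesis \cref{eq:def_d}. Using $\Pr_{\Psi_N^{(l_c)}}[E] \leq \epsilon$, which is precisely the assumed bound for the truncated source under this attack, I conclude $\Pr_{\Psi_N^{(\infty)}}[E] \leq \epsilon + d$. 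As the attack was arbitrary and $d$ bounds the source-replacement states independently of it, the bound holds for all attacks, which is the claim.

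The main obstacle to watch for is ensuring that $E$ is genuinely state-independent: the argument breaks if the threshold $\mathcal{E}_\mathrm{ph}$ or any protocol decision depends on the identity of the source beyond the observed announced data. This is guaranteed here because \cref{eq:general_eph_bound} is required to hold with a fixed function $\mathcal{E}_\mathrm{ph}$ for every $N$, so the event depends on the state only through the classical measurement outcomes --- exactly the setting in which monotonicity applies. A secondary point is to confirm that the trace distance is evaluated \emph{before} the channel and Alice's measurement, so that the same bound $d$ propagates through every attack; this holds since $\mathcal{M}$ subsumes Eve's channel and all subsequent operations.
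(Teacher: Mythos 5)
Your proposal is correct and follows essentially the same route as the paper's proof: the paper likewise fixes Eve's attack, writes the failure probability as a fixed POVM element applied to the post-attack state (your "CPTP map to a classical register plus indicator event" is the same construction), and invokes monotonicity of trace distance under CPTP maps together with the bound $\abs{\Tr[M(\rho-\sigma)]}\leq T(\rho,\sigma)$ to transfer the bound from the truncated to the actual source.
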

	To apply this result, we bound the trace distance $d$ between the actual and truncated source-replacement states. For the exponential model in~\cref{eq:delta_bound}, a straightforward calculation (see End Matter) shows that
	\begin{equation}
		\label{eq:trace_distance_explicit}
		d \leq \sqrt{N} \sum_{l = l_c + 1}^{\infty} \sqrt{\xi_l} = \frac{\sqrt{N \xi_1} e^{-C l_c / 2}}{1 - e^{-C/2}}.
	\end{equation}
	Inverting this relation, to achieve a target $d$, one should choose the effective correlation length as
	\begin{equation}
		l_c = \bigg\lceil\frac{1}{C} \ln \left(\frac{N \xi_1}{d^2(1-e^{-C/2})^2}\right)\bigg\rceil.
		\label{eq:eff_correlation_length}
	\end{equation}
	The total failure probability for the phase-error bound then becomes $(l_c+1)\epsilon + d$, where $\epsilon$ is the failure probability of the original uncorrelated bound.
	
	\textit{Key-rate simulations for BB84.}---We now demonstrate our framework by simulating the achievable secret-key rate for a BB84 protocol with a correlated source, combining \cref{cor:fidelity bound,lem:trace_distance_bound} with the uncorrelated security proof in Ref.~\cite{curras-lorenzoSecurityFramework2025}\footnote{Note that, while the core security proof proposed by Ref.~\cite{curras-lorenzoSecurityFramework2025} assumes uncorrelated sources, this work then argues that the proof could be extended to correlated sources using the approach of Ref.~\cite{pereiraQuantumKey2020}. However, this extension suffers from the limitations identified in our introduction, and by applying our framework directly to the uncorrelated proof, we overcome these limitations.}. We consider a source that emits states of the form in \cref{eq:psi_j1k} with baseline phases $\hat \theta_{j} = (1+\delta_\mathrm{SPF}/\pi) \varphi_j$, where $\varphi_j \in \{0,\pi/4,\pi/2,3\pi/4\}$ is the ideal BB84 phase for setting $j \in \{0_Z, 0_X, 1_Z, 1_X\}$ and $\delta_\mathrm{SPF}$ parametrizes state-preparation flaws arising from the imperfect modulator response. The correlations $\delta^{(l)}_{j_{k-l}}$ satisfy the exponential bound in \cref{eq:delta_bound}. We use the standard BB84 channel model in \cite{pereiraOptimalKey2025} with parameters: error-correction inefficiency $f = 1.16$, dark-count probability $p_d = 10^{-6}$, detector efficiency $\eta_d = 0.73$, and $N = 10^{12}$ transmitted signals. The state-preparation flaw is set to $\delta_\mathrm{SPF} = 0.068$ \cite{honjoDifferentialphaseshiftQuantum2004,xuExperimentalQuantum2015}, and for the correlations model, we consider $\xi_1 \in \{10^{-3}, 10^{-6}\}$ and $C = 12.7$~\cite{agulleiroModelingCharacterization2025}. The correctness and secrecy parameters of the final key are set to $\varepsilon_\mathrm{corr} = \varepsilon_\mathrm{sec} = 10^{-10}$. The results are shown in \cref{fig:results}. As expected, the secret-key rate drops as $\xi_1$ increases. Moreover, for the exponential decay model, the impact of the correlations is dominated by the first few correlation terms, since $\xi_l$ quickly becomes negligible with $l$. Thus, incorporating correlations of unbounded length results in almost no penalty compared to the finite-length case.
	
	% FIGURE 
	\begin{figure}[h]
		\includegraphics[width=8.5cm]{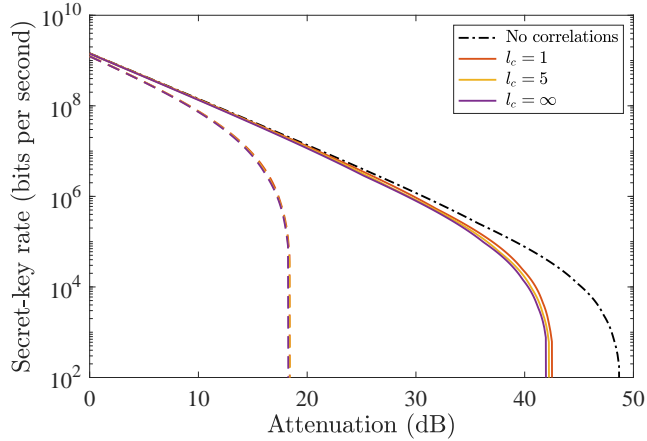} 
		\caption{Secret-key rate versus channel attenuation for a BB84 protocol with a source suffering from LTI correlations, assuming the exponential decay model in \cref{eq:delta_bound}. The secret-key rates are obtained by combining \cref{cor:fidelity bound} with the uncorrelated security analysis of Ref.~\cite{curras-lorenzoSecurityFramework2025}. The solid lines correspond to $\xi_1 = 10^{-6}$ and the dashed lines correspond to $\xi_1 = 10^{-3}$. The lines labelled by $l_c=1$ and $l_c=5$ consider correlations truncated artificially at length $l_c$ (i.e.\ $\xi_l = 0$ for $l > l_c$), while $l_c = \infty$ corresponds to unbounded correlations handled via \cref{lem:trace_distance_bound}. The dashed dotted line corresponds to the ideal case of no correlations.}
		\label{fig:results}
	\end{figure}

	\textit{Conclusion.}---We have established a rigorous mathematical framework for extending phase-error-rate bounds to scenarios with encoding correlations, which can then be directly used to prove the security of QKD protocols in their presence. Our framework resolves key limitations of previous approaches: it eliminates the need for multiple privacy amplification steps, circumvents contested composability arguments, handles unbounded correlation lengths naturally, and applies systematically to any existing phase-error-estimation-based analysis that considers appropriate admissibility conditions on the emitted states. We anticipate that our framework will prove valuable for the security analysis of practical QKD implementations where such correlations are unavoidable, and may extend to other scenarios beyond QKD involving temporally correlated sources.

	\textit{Acknowledgements}---We thank Devashish Tupkary and Shlok Nahar for valuable discussions, and Davide Rusca for insights on LTI correlations. This work was support by the Galician Regional Government (consolidation of research units: atlanTTic), the Spanish Ministry of Economy and Competitiveness (MINECO), the Fondo Europeo de Desarrollo Regional (FEDER) through the grant No.~PID2024-162270OB-I00, MICIN with funding from the European Union NextGenerationEU (PRTRC17.I1) and the Galician Regional Government with own funding through the “Planes Complementarios de I+D+I con las Comunidades Autonomas” in Quantum Communication, the “Hub Nacional de Excelencia en Comunicaciones Cuanticas” funded by the Spanish Ministry for Digital Transformation and the Public Service and the European Union NextGenerationEU, the European Union’s Horizon Europe Framework Programme under the Marie Sklodowska-Curie Grant No.~101072637 (Project QSI), the project “Quantum Secure Networks Partnership” (QSNP, grant agreement No 101114043) and the European Union via the European Health and Digital Executive Agency (HADEA) under the Project QuTechSpace (grant 101135225) and the Project IberianQCI (grant 101249593), as well as the Programa de Cooperación Interreg VI-A España--Portugal (POCTEP) 2021--2027 through the project QUANTUM\_IBER\_IA. G.C.-L. acknowledges funding from the European Union's Horizon Europe research and innovation programme under the Marie Skłodowska-Curie Postdoctoral Fellowship grant agreement No.\ 101149523. K.T. aknowledges support from JSPS KAKENHI Grant Numbers 23K25793 and 23H01096.. 
	
	\bibliography{refs,refs_extra}
	
	\clearpage

	\section*{End Matter}
	
	\textit{Application to B92.}---Consider a security proof for the B92 protocol, such as Ref.~\cite{tamakiUnconditionalSecurity2004}, that assumes Alice emits some characterized states $\ket{\phi_0}$ and $\ket{\phi_1}$, and suppose that these states satisfy
	\begin{equation}
		\label{eq:phi_0_phi_1_fid}
		\abs{\braket{\phi_1}{\phi_0}}^2 = c.
	\end{equation}
	Now consider a practical implementation in which Alice has a flawed but uncorrelated source emitting states $\ket*{\psi_0^{(k)}}_{T_k}$ and $\ket*{\psi_1^{(k)}}_{T_k}$ in round $k$ that satisfy
	\begin{equation}
		\abs*{\braket*{\psi^{(k)}_1}{\psi^{(k)}_0}_{T_k}}^2 \geq c.
	\end{equation}
	Then there exists a completely-positive trace-preserving (CPTP) map $\mathcal{M}_{k}$ such that $\ketbra*{\psi_j^{(k)}}_{T_k} = \mathcal{M }_{k}(\ketbra*{\phi_j}_{T_k})$ for $j \in \{0,1\}$~\cite{jiangSidechannelSecurity2024}. Since this map can be absorbed into Eve's attack channel, the proof applies under the general admissibility set
	\begin{equation}
		\mathcal{S} = \big\{\{\ket{\psi_j}\}_{j\in \{0,1\}} : \abs{\braket{\psi_1}{\psi_0}}^2 \geq c\big\}.
	\end{equation}
	
	To directly extend the proof to correlated sources via Corollary~\ref{cor:main}, the requirement is that the multi-round correlated states defined in \cref{eq:multi_round_correlated_states_def} satisfy
	\begin{equation}
		\label{eq:b92_correlated_condition}
		\abs{\braket*{\Psi_{j_{k-l_c}^{k-1}, 1, j_{k+1}^{k+l_c}}}{\Psi_{j_{k-l_c}^{k-1}, 0, j_{k+1}^{k+l_c}}}_{T_k^{k+l_c}}}^2 \geq c,
	\end{equation}
	for all fixed past settings $j_{k-l_c}^{k-1}$ and future settings $j_{k+1}^{k+l_c}$. This condition ensures that the correlated states $\{\ket*{\Psi_{j_{k-l_c}^{k+l_c}}}_{T_k^{k+l_c}}\}_{j_k \in \{0,1\}}$ are isometrically equivalent to an acceptable family $\{\ket{\psi_0}, \ket{\psi_1}\} \in \mathcal{S}$, since both families share the same inner product structure by construction.
	
	\textit{Application to side-channel-secure (SCS) QKD.}---In SCS-QKD protocols \cite{wangPracticalLongDistance2019,jiangSidechannelSecurity2024}, since Alice and Bob send only two states each, one can also apply the same argument to trivially extend an uncorrelated security proof to the acceptability set in \cref{eq:b92_correlated_condition} for the states emitted by each user. Our results are not directly applicable to the existing security proofs in \cite{wangPracticalLongDistance2019,jiangSidechannelSecurity2024}, however, since these are not based on obtaining a phase-error bound of the form in \cref{eq:general_eph_bound} that holds for general eavesdropping attacks, but are based on the application of the Postselection Technique to extend the proof from collective to general attacks. Still, a recently proposed variant of the protocol \cite{shanPracticalPhaseCoding2023} performs better in some contexts and its security proof is based on obtaining a phase-error rate bound as in \cref{eq:general_eph_bound}, and thus the security proof in \cite{shanPracticalPhaseCoding2023} can be extended directly to correlated sources via \cref{cor:main}, with the condition that the multi-round correlated states emitted by Alice and Bob satisfy a bound as in \cref{eq:b92_correlated_condition}.

	%%%%%%%%%%%
	%%%%%%%%%%%

	\textit{Proof of \cref{eq:xi_jk_bound}.}---After identifying $\ket*{\phi_{j_{k-l_c}^{k+l_c}}^{(k)}}_{T_k} \mapsto \ket*{\psi^{(k)}_{j_{k-l_c}^{k}}}_{T_k}$ and \cref{eq:lambda_def}, the inner product squared in \cref{eq:correlations_general} becomes
	\begin{align}  
		&\abs{\bra*{\psi^{(k)}_{j_{k-l_c}^{k}}}_{T_k} \bigotimes_{m=k+1}^{k+l_c} \bra*{{\psi^{(m)}_{j_{m-l_c}^{k-1},\, j^*,\, j_{k+1}^{m}}}}_{T_m} \ket*{\Psi_{j_{k-l_c}^{k+l_c}}^{(k)}}_{T_k^{k+l_c}}}^2 \nonumber \\
		&=\prod_{m=k+1}^{k+l_c} \abs{\braket*{\psi^{(m)}_{j_{m-l_c}^{k-1},\, j^*,\, j_{k+1}^{m}}}{\psi^{(m)}_{j_{m-l_c}^{k-1},\, j_k,\, j_{k+1}^{m}}}_{T_m}}^2 \nonumber \\
		&= \prod_{l=1}^{l_c} \cos^2 \!\big(\delta^{(l)}_{j_k} - \delta^{(l)}_{j^*}\big) = \prod_{l=1}^{l_c} \Big(1-\sin^2\!\big(\delta^{(l)}_{j_k} - \delta^{(l)}_{j^*}\big)\Big) \nonumber \\
		&\ge \prod_{l=1}^{l_c} \Big(1-\big(\delta^{(l)}_{j_k} - \delta^{(l)}_{j^*}\big)^2\Big) \ge \prod_{l=1}^{l_c} (1-\xi_l) \nonumber \\
		&\ge 1-\sum_{l=1}^{l_c}\xi_l ~=: 1-\xi.
		\label{eq:proof_eq15}
	\end{align}
	Then, by substituting the exponential model in \cref{eq:delta_bound} into \cref{eq:proof_eq15}, we obtain \cref{eq:xi_jk_bound}.

	\textit{Proof of \cref{eq:trace_distance_explicit}.}---Let
	\begin{equation}
		\label{eq:source_replacement_correlated_lc}
		\ket*{\Psi_N^{(l_c)}}_{A_1^N T_1^N} = \sum_{j_1^N} \bigotimes_k \sqrt{p_{j_k}} \ket*{j_k}_{A_k} \ket*{\psi^{(k)}_{j_{k-l_c}^k}}_{T_k},
	\end{equation}
	where we have defined
	\begin{equation}
		\ket*{\psi^{(k)}_{j_{k-l_c}^k}}_{T_k} = \ket*{\psi^{(k)}_{j_{1}^k}}_{T_k} \bigg\vert_{j_1 = j_2 = ... = j_{k-l_c-1} = j^{*}},
	\end{equation}
	with $j^* \in \mathcal{J}$ being an arbitrary fixed setting. Using similar derivations as in \cite[Appendix B]{pereiraQuantumKey2024a}, we have that
	\begin{equation}
		\begin{gathered}
			\abs{\braket*{\psi^{(k)}_{j_{k-l_c}^k}}{\psi^{(k)}_{j_{1}^k}}_{T_k}}^2 \geq 1- \Big(\sum_{l = l_c+1}^{\infty} \sqrt{\xi_l}\Big)^2.
		\end{gathered}
	\end{equation}
	Therefore,
	\begin{equation}
		\begin{gathered}
			\abs{\braket*{\Psi_N^{(l_c)}}{\Psi_N^{(\infty)}}_{A_1^N T_1^N}}^2 \geq \bigg[1- \Big(\sum_{l = l_c+1}^{\infty} \sqrt{\xi_l}\Big)^2\bigg]^N.
		\end{gathered}
	\end{equation}
	Thus, using the fact that for $x \in [0,1]$, $1-(1-x)^N \leq N x$,
	\begin{equation}
		d = \sqrt{1-\abs{\braket*{\Psi_N^{(l_c)}}{\Psi_N^{(\infty)}}_{A_1^N T_1^N}}^2} \leq \sqrt{N} \sum_{l = l_c+1}^{\infty} \sqrt{\xi_l}.
		\label{eq:proof_eq19}
	\end{equation}
	Then, by substituting the exponential model in \cref{eq:delta_bound} into \cref{eq:proof_eq19} and evaluating the resulting convergent series, we obtain \cref{eq:trace_distance_explicit}.

	\clearpage
	
	\onecolumngrid
	
	\appendix
	
	%\gray{TODO: Write introduction to Supplementary Material.}
	
	\section{Security framework based on phase-error estimation}
	\label{app:sec_framework_eph}
	
	Here, we present the security framework based on phase-error estimation and show that it remains valid even when the emitted pulses are correlated. Consider a general prepare-and-measure QKD protocol where, in each round $k \in \{1,...N\}$, Alice selects a setting choice $j_k \in \mathcal{J}$ with probability $p_{j_k}$, and sends a quantum state to Bob through an insecure quantum channel. Due to potential correlations in Alice's source (arising from, e.g., memory effects in the modulator), the state emitted in round $k$ may depend not only on the current setting choice $j_k$, but also on the previous history $j_1^{k-1}$. We denote this state by $\ket*{\psi^{(k)}_{j_1^k}}_{T_k}$, where $j_1^k \coloneqq j_1 ... j_k$.
	
	As for Bob, we consider that, in each round, he chooses $\beta_k \in \{\mathrm{key},\mathrm{test}\}$ with probability $p_{\beta_k}$, and performs a POVM $\vec \Gamma_{\beta_k}$. Here, $\vec \Gamma_\mathrm{test}$ is a POVM with any number of outcomes\footnote{Note that, in practice, Bob may perform more than one test POVM, but the act of choosing between various test POVMs and then performing one of them can be mathematically described by a single POVM, so there is no loss of generality in assuming one single test POVM.} (possibly including a non-detection outcome, indicating that the data from that round will not be used), while $\vec \Gamma_\mathrm{key} \coloneqq \{\Gamma_0^\mathrm{key},\Gamma_1^\mathrm{key},\Gamma_\bot^\mathrm{key}\}$ is a POVM with three elements, corresponding respectively to bit 0, bit 1, and non-detection.
	
	For concreteness, we consider that Alice and Bob extract their sifted keys from the rounds in which $j_k \in \{0,1\}$, $\beta_k = \mathrm{key}$, and Bob obtains a detection (i.e., outcome 0 or 1 rather than $\bot$)\footnote{More generally, one could consider situations in which the decision of whether or not a round is used for sifted key extraction is more complicated (e.g., by using additional auxiliary random variables) and/or situations in which the algorithm to extract the sifted key bits themselves is more complicated; see e.g.\ \cite{zhouNumericalMethod2022} for a more general and abstract description of a QKD protocol and its associated phase-error estimation protocol. To cover such situations, one simply needs to modify Steps 1--5 in the \emph{Actual protocol (source-replaced)} (defined later) appropriately so that they determine a set of rounds $\mathcal{D}_\mathrm{key}$ and a shared state between Alice and Bob such that, when they perform their sifted-key measurements in Step 6, the statistics are equivalent to those of the \emph{Actual protocol (prepare-and-measure)}. We remark that all our results in \cref{app:main_results} for extending a security proof from the uncorrelated to the correlated scenario still apply after such modifications, since the proof of these results does not depend on Alice's and Bob's specific actions in Steps 1--5, as long as the equivalence to the actual protocol is maintained.}. This allows us to define the following protocol:
	
	\vspace{3pt}
	\begin{mdframed}
		\noindent\textbf{Actual Protocol (prepare-and-measure)} 
		
		\begin{enumerate}
			\item \textit{State preparation:} For each round $k \in \{1,...,N\}$, Alice randomly selects a setting $j_k \in \mathcal{J}$ with probability $p_{j_k}$, prepares a quantum state $\ket*{\psi^{(k)}_{j_1^k}}_{T_k}$, and sends it to Bob through the quantum channel.
			
			\item \textit{Eve's attack:} Eve performs the most general attack allowed by quantum mechanics on the transmitted systems $T_1^N$, and re-sends some output systems $B_1^N$ to Bob, while keeping an ancillary system $E$.
			
			\item \textit{Measurement and basis choice:} For each round $k$, Bob chooses $\beta_k \in \{\mathrm{key},\mathrm{test}\}$ with probability $p_{\beta_k}$ and performs the corresponding POVM: $\vec \Gamma_\mathrm{key}$ if $\beta_k = \mathrm{key}$, or $\vec \Gamma_\mathrm{test}$ if $\beta_k = \mathrm{test}$. Bob records his measurement outcome.
			
			\item \textit{Sifting:} Bob announces which rounds resulted in a detection, along with his choice of $\beta_k$ for each detected round. For the detected rounds with $\beta_k = \mathrm{test}$, and for the detected rounds with $\beta_k = \mathrm{key}$ and $j_k \notin \{0,1\}$, Alice announces her setting $j_k$. For the detected rounds with $\beta_k = \mathrm{key}$ and $j_k \in \{0,1\}$, Alice announces $\alpha_k = \mathrm{key}$. 
			
			\item \textit{Bit-error-rate estimation:} Alice and Bob choose a random subset $\mathcal{D}_\mathrm{key}$ of the detected rounds in which $\alpha_k = \beta_k = \mathrm{key}$, which will be used to generate the sifted key, and announce which rounds belong to this subset. For the remaining rounds with $\alpha_k = \beta_k = \mathrm{key}$, Alice and Bob announce their bit values (Alice announces $j_k \in \{0,1\}$ and Bob announces his measurement outcome $b_k \in \{0,1\}$) to estimate the bit-error rate.
			
			\item \textit{Sifted key formation:} For each round $k \in \mathcal{D}_\mathrm{key}$, Alice's sifted key bit is $j_k$ and Bob's sifted key bit is his measurement outcome $b_k$.
			
			\item \textit{Variable-length decision:} Let $\vec n$ denote all the data announced by Alice and Bob until this point. Using this data, Alice and Bob compute $\lambda_\mathrm{EC}(\vec n)$ (the number of bits to be revealed in one-way error correction) and $l(\vec n)$ (the length of the final key to be produced, see \cref{eq:final_key_length_def}). Aborting corresponds to $l(\vec n) = 0$.
			
			\item \textit{Error correction and error verification:} Alice and Bob implement a one-way error correction protocol that reveals $\lambda_\mathrm{EC}(\vec n)$ bits of information. They implement error verification by using a common and randomly selected hash function from a two-universal family of output length $\log_2(2/\varepsilon_{\rm EV})$ bits, having one of the parties announce the result, and comparing their values.
			
			\item \textit{Privacy amplification:} If error verification succeeds, Alice and Bob select a random hash function from a two-universal family and apply it to their sifted key to obtain a final key of length $l(\vec n)$.
		\end{enumerate}
	\end{mdframed}
	\vspace{5pt}

	To analyze the security of the above protocol, we employ the source-replacement technique. The key observation is that Alice's prepare-and-measure procedure in the actual protocol is equivalent to the following: Alice first prepares the global entangled state
	\begin{equation}
		\label{eqapp:source_replacement_state}
		\ket*{\Psi_N}_{A_1^N T_1^N} = \sum_{j_1^N} \bigotimes_k \sqrt{p_{j_k}} \ket*{j_k}_{A_k} \ket*{\psi^{(k)}_{j_1^k}}_{T_k},
	\end{equation}
	(which is \cref{eq:source_replacement_correlated} in the main text), sends the photonic systems $T_1^N \coloneqq T_1 ... T_N$ through the quantum channel, and then measures each system $A_k$ in the $\{\ket{j_k}_{A_k}\}_{j_k \in \mathcal{J}}$ basis to determine her setting choice. Since this measurement commutes with all operations on the $T_k$ systems (including Eve's attack and Bob's measurements), the statistical outcomes of the protocol are identical whether Alice measures before or after transmission.
	
	Furthermore, for the security analysis, it is convenient to decompose Bob's key POVM $\vec \Gamma_\mathrm{key} = \{\Gamma_0^\mathrm{key}, \Gamma_1^\mathrm{key}, \Gamma_\bot^\mathrm{key}\}$ into two steps: first, a filter operation $\{F, \mathbb{I}-F\}$ with $F = \Gamma_0^\mathrm{key} + \Gamma_1^\mathrm{key}$ that determines whether a detection occurs, followed by a two-outcome POVM $\{G_0, G_1\}$ that determines the bit value conditional on detection\footnote{These POVM elements are defined as $G_b := \sqrt{F^+} \, \Gamma_b^{\rm key} \, \sqrt{F^+} + P_b$ for $b \in \{0,1\}$, where $F^+$ denotes the pseudoinverse of $F$, and $P_b$ are any two positive operators satisfying $\sum_{b \in \{0,1\}} P_b = \mathbb{I} - \Pi_{F}$, with $\Pi_{F}$ denoting the projector onto the support of $F$. See, e.g., \cite{tupkaryPhaseError2025} or \cite[Appendix A]{curras-lorenzoSecurityQuantum2025}.}. This decomposition does not change the measurement statistics, but it allows us to defer Bob's bit-value measurement to a later stage in the protocol, which is useful for relating the actual protocol to the phase-error estimation protocol.
	
	Using these observations, we can define the following source-replaced version of the protocol, which is statistically equivalent to the \textbf{Actual Protocol (prepare-and-measure)}:
	
	\vspace{3pt}
	\begin{mdframed}
		\noindent\textbf{Actual Protocol (source replaced)} 
		
		\begin{enumerate}
			\item \textit{State preparation:} Alice prepares her global entangled state $\ket*{\Psi_N}_{A_1^N T_1^N}$ and sends the photonic systems $T_1^N$ through the quantum channel.
			
			\item \textit{Eve's attack:} Eve performs the most general attack allowed by quantum mechanics on the transmitted systems $T_1^N$, and re-sends some output systems $B_1^N$ to Bob, while keeping an ancillary system $E$.
			
			\item \textit{Detection and test measurements:} For each round, Bob decides $\beta_k \in \{\mathrm{key},\mathrm{test}\}$ with probability $p_{\beta_k}$. If $\beta_k = \mathrm{key}$, he applies the filter $\{F,\mathbb{I}-F\}$, and if $\beta_k = \mathrm{test}$, he measures $\vec \Gamma_\mathrm{test}$. Based on these outcomes, Bob announces which rounds are detected, and his choice of $\beta_k$ in the detected rounds.
			
			\item \textit{Key/test determining and setting announcement:} For each detected round with $\beta_k = \mathrm{test}$, Alice measures her system $A_k$ in the $\{\ket{j_k}_{A_k}\}_{j_k \in \mathcal{J}}$ basis, and announces her outcome $j_k$. For each detected round with $\beta_k = \mathrm{key}$, Alice attempts a projection onto the subspace spanned by $\{\ket*{0}_{A_k},\ket*{1}_{A_k}\}$. If successful, Alice announces $\alpha_k = \mathrm{key}$; otherwise, Alice measures her system $A_k$ in the $\{\ket{j_k}_{A_k}\}_{j_k \in \mathcal{J}}$ basis, and announces her outcome $j_k$.
			
			\item \textit{Bit-error-rate estimation:} Alice and Bob choose a random subset $\mathcal{D}_\mathrm{key}$ of the detected rounds in which $\alpha_k = \beta_k = \mathrm{key}$, which will be used to generate the sifted key, and announce this information. For the remaining rounds with $\alpha_k = \beta_k = \mathrm{key}$, Alice measures $A_k$ in $\{\ket*{0}_{A_k},\ket*{1}_{A_k}\}$ and Bob measures $\{G_{0},G_{1}\}$, and both announce their results.
			
			\item \textit{Sifted-key measurements:} For each round $k \in \mathcal{D}_\mathrm{key}$, Alice measures $A_k$ in $\{\ket*{0}_{A_k},\ket*{1}_{A_k}\}$ and Bob measures $\{G_{0},G_{1}\}$, and they define their respective sifted keys as their respective bit outcomes in these rounds.
			
			\item[7-9.] Same as in \textbf{Actual protocol (prepare-and-measure)}.
		\end{enumerate}
	\end{mdframed}
	\vspace{5pt}

	To prove the security of the final key pair, we consider the following phase-error estimation protocol:

	\vspace{3pt}
	\footnotetext{Here, $\{G_{+},G_{-}\}$ is an arbitrary two-outcome POVM, and does not necessarily correspond to anything that Bob does in the actual protocol. For the special case in which Bob performs two POVMs satisfying the basis-independent detection efficiency condition, i.e., when $\vec \Gamma_\mathrm{test} \coloneqq \{\Gamma_+^\mathrm{test},\Gamma_-^\mathrm{test},\Gamma_\bot^\mathrm{test}\}$ and $\Gamma_\bot^\mathrm{key} = \Gamma_\bot^\mathrm{test}$, it is useful to define $G_b := \sqrt{F^+} \, \Gamma_b^{\rm test} \, \sqrt{F^+} + P_b$ for $b \in \{+,-\}$, where $F^+$ denotes the pseudoinverse of $F$ and $P_b$ are any two positive operators satisfying $\sum_{b \in \{+,-\}} P_b = \mathbb{I} - \Pi_{F}$, with $\Pi_{F}$ denoting the projector onto the support of $F$. By doing so, Bob's fictitious measurement for the key rounds in the phase-error estimation protocol corresponds directly to his actual measurement for the test rounds, which simplifies significantly the security proof by allowing a direct application of random sampling arguments. However, in more general scenarios, including BB84-type protocols for which the basis-independent detection efficiency condition is not satisfied \cite{tupkaryPhaseError2025,curras-lorenzoSecurityQuantum2025,wangPhaseError2025,fungSecurityProof2009,marcominiLosstolerantQuantum2025}, one needs to define $\{G_{+},G_{-}\}$ in a way in which it does not correspond exactly to a measurement that Bob does in the actual protocol, see \cite{tupkaryPhaseError2025,curras-lorenzoSecurityQuantum2025,wangPhaseError2025,fungSecurityProof2009,marcominiLosstolerantQuantum2025} for more information.}
	\begin{mdframed}
		\noindent\textbf{Phase-error estimation protocol} 
		
		\begin{enumerate}
			\item[1-5.] Same as in \textbf{Actual protocol (source replaced)}
			
			\item[6.] \textit{Phase-error measurements:} For each round $k \in \mathcal{D}_\mathrm{key}$, Alice measures $A_k$ in $\{\ket*{+}_{A_k},\ket*{-}_{A_k}\}$, where $\ket*{\pm}_{A_k} = (\ket*{0}_{A_k} \pm \ket*{1}_{A_k})/\sqrt{2}$, and Bob measures $\{G_{+},G_{-}\}$\footnotemark. We denote the phase-error rate $\bm{e_\mathrm{ph}}$ as the fraction of events in which their outcomes differ.
		\end{enumerate}
	\end{mdframed}
	\vspace{5pt}
	
	The objective of a security proof based on phase-error estimation is to find a bound of the form
	\begin{equation}
		\label{eqapp:general_eph_bound}
		\Pr[\bm\eph > \mathcal{E}_\mathrm{ph} (\bm{\vec n};N,\epsilon)] \leq \epsilon.
	\end{equation}
	Here, $\bm\eph$ is the random variable associated to the phase-error rate, $\bm{\vec n}$ is the random vector representing the announced data in Steps 1-5, $N$ is the total number of transmitted rounds, $\epsilon$ is the failure probability of the bound, and $\mathcal{E}_\mathrm{ph}$ is a function relating all these quantities. It is well known that a bound of the form in \cref{eqapp:general_eph_bound} is enough to establish security using either entropic uncertainty relations (EUR) and the leftover hashing lemma (LHL) \cite{tomamichelUncertaintyRelation2011,tomamichelTightFinitekey2012,tomamichelLargelySelfcontained2017} or phase-error correction (PEC) arguments \cite{koashiSimpleSecurity2009}, even when the final key is allowed to be of variable length \cite{curras-lorenzoTightFinitekey2021,tupkaryPhaseError2025,hayashiConciseTight2012,kawakamiSecurity}. Here, we consider the EUR+LHL framework, and in particular, we use the following result:
	
	\begin{theorem}[Variable-length security of QKD protocols from EUR+LHL]
		\label{thm:variable_length_security_BB84_type}
		Suppose that, for a given source-replacement state $\ket*{\Psi_N}_{A_1^N T_1^N}$, we have the guarantee that, in the {Phase-error estimation protocol}, \cref{eqapp:general_eph_bound} holds for any eavesdropping attack. Let $\lambda_\mathrm{EC}(\vec n)$ be a function that determines the number of bits revealed in error correction, and let
		\begin{equation}
			\label{eq:final_key_length_def}
			l(\vec n) = \max\bigg[0,n_K \big[1-h\big(\mathcal{E}_\mathrm{ph} (\vec n;N,\epsilon)\big)\big] - \lambda_\mathrm{EC}(\vec n) - 2 \log_2\Big(\frac{1}{2\varepsilon_\mathrm{PA}}\Big)-\log_2\Big(\frac{2}{\varepsilon_\mathrm{EV}}\Big)\bigg],
		\end{equation}
		be a function that determines the length of the final key, where $n_K$ is determined by $\vec n$, $h(x)$ is the binary entropy function $-x\log_2(x) - (1-x)\log_2(1-x)$ for $x\leq 1/2$ and $h(x) = 1$ otherwise. Then, if Alice and Bob run the Actual protocol (prepare-and-measure) using this choice of $\lambda_\mathrm{EC}(\vec n)$ and $l(\vec n)$, the output key is $(2 \sqrt{\epsilon} +\varepsilon_\mathrm{PA} + \varepsilon_\mathrm{EV})$-secure. 
	\end{theorem}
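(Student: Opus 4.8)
The plan is to establish secrecy and correctness separately and then combine them by a union bound over the two corresponding security definitions. By the protocol equivalences set up above---Alice's computational-basis measurement commutes with every operation on the $T_k$ systems, and Bob's key POVM factors through the filter $\{F,\mathbb{I}-F\}$ followed by a bit-value measurement---the joint classical--quantum state shared by Alice, Bob and Eve at the end of Step~5 is identical in the \textbf{Actual protocol (source replaced)} and in the \textbf{Phase-error estimation protocol}. Hence any bound on the virtual phase-error rate obtained in the latter certifies Eve's uncertainty about Alice's sifted key in the former. First I would regard the classical announcement register $\bm{\vec n}$ as fixed within each branch, so that the error-correction leakage $\lambda_\mathrm{EC}(\vec n)$, the sifted-key length $n_K$, and the output length $l(\vec n)$ are all deterministic there.

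The secrecy argument then proceeds through three standard reductions. \emph{(i) From a tail bound to a smooth max-entropy.} The hypothesis~\eqref{eqapp:general_eph_bound} says $\bm\eph$ exceeds $\mathcal{E}_\mathrm{ph}(\bm{\vec n};N,\epsilon)$ with probability at most $\epsilon$; I would convert this into $H_\mathrm{max}^{\sqrt{\epsilon}}(\bm X_K\mid B)\le n_K\, h\!\big(\mathcal{E}_\mathrm{ph}(\vec n;N,\epsilon)\big)$ on the hypothetical $\{\ket{+},\ket{-}\}$-outcomes, the $\sqrt{\epsilon}$ smoothing arising because the purified distance to the conditioned (``good-event'') state scales as the square root of its failure probability. \emph{(ii) Entropic uncertainty relation.} Since Alice's raw key comes from measuring $A_k$ in $\{\ket{0},\ket{1}\}$ while the phase-error measurement uses the complementary basis $\{\ket{+},\ket{-}\}$, with maximal overlap $c=1/2$, the Tomamichel--Renner relation gives $H_\mathrm{min}^{\sqrt{\epsilon}}(\bm Z_K\mid E)\ge n_K-H_\mathrm{max}^{\sqrt{\epsilon}}(\bm X_K\mid B)\ge n_K\big[1-h(\mathcal{E}_\mathrm{ph})\big]$. \emph{(iii) Leakage accounting.} The one-way error-correction transcript reveals $\lambda_\mathrm{EC}(\vec n)$ bits and the verification hash a further $\log_2(2/\varepsilon_\mathrm{EV})$ bits, which by the smooth-min-entropy chain rule reduce the conditional min-entropy additively, leaving it at least $n_K[1-h(\mathcal{E}_\mathrm{ph})]-\lambda_\mathrm{EC}(\vec n)-\log_2(2/\varepsilon_\mathrm{EV})$.

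With this lower bound in hand, I would invoke the leftover hashing lemma: hashing the sifted key to $l(\vec n)$ bits with a two-universal family yields a key within trace distance $2\sqrt{\epsilon}+\tfrac12\sqrt{2^{\,l-H_\mathrm{min}}}$ of an ideal uniform-and-decoupled key. The length in~\eqref{eq:final_key_length_def} is exactly $H_\mathrm{min}^{\sqrt{\epsilon}}(\bm Z_K\mid E,C)-2\log_2\!\big(1/(2\varepsilon_\mathrm{PA})\big)$, which forces the second term to equal $\varepsilon_\mathrm{PA}$, so each branch is $(2\sqrt{\epsilon}+\varepsilon_\mathrm{PA})$-secret; the $\max[0,\cdot]$ truncation disposes of aborts trivially. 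Correctness follows from error verification: by the two-universal property, distinct sifted keys pass the $\log_2(2/\varepsilon_\mathrm{EV})$-bit hash comparison with probability at most $\varepsilon_\mathrm{EV}$, so the protocol is $\varepsilon_\mathrm{EV}$-correct. Combining the two gives the claimed $(2\sqrt{\epsilon}+\varepsilon_\mathrm{PA}+\varepsilon_\mathrm{EV})$-security.

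I expect the main obstacle to be the \emph{variable-length} bookkeeping: both $l(\vec n)$ and $\lambda_\mathrm{EC}(\vec n)$ are functions of the random announcement $\bm{\vec n}$, so the single failure-probability-$\epsilon$ statement~\eqref{eqapp:general_eph_bound} cannot simply be applied at a fixed length. The resolution, following Ref.~\cite{tupkaryPhaseError2025}, is to define one global smoothing event---$\{\bm\eph\le\mathcal{E}_\mathrm{ph}(\bm{\vec n};N,\epsilon)\}$, of probability at least $1-\epsilon$ over the joint law of $\bm{\vec n}$---and exploit the classicality of $\bm{\vec n}$ to decompose the resulting conditioned state branch-by-branch: within each branch the max-entropy bound holds deterministically, the EUR and chain rule apply at fixed length, and the data-dependent $l(\vec n)$ enters only in the branch-wise hashing. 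The delicate point is that the common $\sqrt{\epsilon}$ smoothing is \emph{shared} across all branches rather than accumulating, so that the secrecy parameter $2\sqrt{\epsilon}+\varepsilon_\mathrm{PA}$ survives the recombination; checking that this branch-wise construction reproduces the composable variable-length security definition is the step demanding the most care.
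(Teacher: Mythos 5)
Your overall architecture matches the paper's: reduce to the source-replaced protocol, condition on the announcement $\vec n$, apply the entropic uncertainty relation branch-by-branch, delegate the variable-length recombination and hashing to Theorem~4 of Ref.~\cite{tupkaryPhaseError2025}, and add correctness from error verification via a union bound. However, there is a genuine gap in the one step that is the actual crux of the variable-length argument: the smoothing parameter you assign to each branch. You convert the global tail bound \cref{eqapp:general_eph_bound}, which has failure probability $\epsilon$ under the \emph{joint} law of $\bm{\vec n}$ and $\bm\eph$, into the per-branch statement $H_\mathrm{max}^{\sqrt{\epsilon}}(X_A^{n_K}\mid X_B^{n_K})_{\rho^\mathrm{virt}_{\vert\Omega(\vec n)}}\le n_K\,h\big(\mathcal{E}_\mathrm{ph}(\vec n;N,\epsilon)\big)$. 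This is false in general: the quantity controlling the purified distance to the good-event-conditioned state \emph{within} branch $\vec n$ is the conditional failure probability $\kappa(\vec n)=\Pr[\bm\eph>\mathcal{E}_\mathrm{ph}(\vec n;N,\epsilon)\mid\Omega(\vec n)]$, and \cref{eqapp:general_eph_bound} only constrains the average, $\sum_{\vec n}\Pr[\Omega(\vec n)]\,\kappa(\vec n)\le\epsilon$. A branch of tiny probability can have $\kappa(\vec n)$ close to $1$ while the global bound still holds, so no uniform $\sqrt{\epsilon}$-smoothing is available branch-wise. Your proposed repair---condition once on the global good event so that the per-branch bounds hold ``deterministically'' and the $\sqrt{\epsilon}$ smoothing is ``shared''---hits the same problem in disguise: conditioning on the good event distorts each branch by a different amount, again $\sim\sqrt{\kappa(\vec n)}$ in purified distance, and moreover that event is defined by the virtual $X$-basis measurement, so ``conditioning the actual state on it'' is not an operation available on the side where the min-entropy and the hashing live; the smoothing has to be transferred through the EUR branch by branch.

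The paper's proof handles exactly this point differently, and correctly: it applies the EUR per branch with smoothing parameter $\sqrt{\kappa(\vec n)}$ (not $\sqrt{\epsilon}$), obtaining $H^{\sqrt{\kappa(\vec n)}}_\mathrm{min}(Z_A^{n_K}\mid WE)_{\rho_{\vert\Omega(\vec n)}}\ge n_K\big[1-h\big(\mathcal{E}_\mathrm{ph}(\vec n;N,\epsilon)\big)\big]$, and then feeds the whole collection $\{\kappa(\vec n)\}$ into Theorem~4 of Ref.~\cite{tupkaryPhaseError2025} via the identification $\kappa_{(i,j)}\mapsto\kappa(\vec n)$, $\varepsilon_\mathrm{AT}^2\mapsto\epsilon$, i.e., using only the averaged constraint $\sum_{\vec n}\Pr[\Omega(\vec n)]\kappa(\vec n)\le\epsilon$. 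The final $2\sqrt{\epsilon}$ then emerges inside that theorem from the concavity of the square root (Jensen), $\sum_{\vec n}\Pr[\Omega(\vec n)]\sqrt{\kappa(\vec n)}\le\sqrt{\epsilon}$: the per-branch secrecy contributions $2\sqrt{\kappa(\vec n)}+\varepsilon_\mathrm{PA}$ are \emph{averaged}, not uniformly bounded. So the missing idea in your write-up is the pair (per-branch conditional failure probability, Jensen-type recombination); once you replace your uniform $\sqrt{\epsilon}$ smoothing by $\sqrt{\kappa(\vec n)}$ and invoke the averaged hypothesis of the variable-length theorem, your argument becomes the paper's proof.
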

	\begin{proof}
		The security of the \textbf{Actual protocol (prepare-and-measure)} follows from the security of the \textbf{Actual protocol (source replaced)}, since these two protocols are statistically equivalent.
		
		Let $W$ be the classical register containing the outcome of the announced data vector $\bm{\vec n}$, let $\Omega(\vec n)$ be the event in which $\bm{\vec n} = \vec n$ is observed, let $\rho_{\vert \Omega(\vec n)}$ be the state shared by Alice, Bob and Eve before Step 7 in the \textit{Actual protocol (source replaced)} conditional on $\Omega(\vec n)$, let $\rho_{\vert \Omega(\vec n)}^\mathrm{virt}$ be the state shared by Alice, Bob and Eve at the end of the \textit{Phase-error estimation protocol} conditional on $\Omega(\vec n)$, and let
		\begin{equation}
			\kappa(\vec n) \coloneqq \Pr[\bm\eph > \mathcal{E}_\mathrm{ph} (\vec n;N,\epsilon)\mid \Omega(\vec n)].
		\end{equation}
		Also, let $Z_A^{n_K}$ be the register in $\rho_{\vert \Omega(\vec n)}$ containing Alice's sifted key in the \textit{Actual protocol (source replaced)}, and let $X_A^{n_K}$ ($X_B^{n_K}$) be the register in $\rho_{\vert \Omega(\vec n)}^\mathrm{virt}$ containing Alice's (Bob's) bit outcomes for the rounds in $\mathcal{D}_\mathrm{key}$ in the \textit{Phase-error estimation protocol}. Applying the EUR on the states conditional on $\Omega(\vec n)$ and with smoothing parameter $\kappa(\vec n)$ as in \cite[Theorem 1]{tupkaryPhaseError2025} (see also \cite[Supp.\ Note A]{curras-lorenzoTightFinitekey2021}), we obtain
		\begin{equation}
			H^{\sqrt{\kappa(\vec n)}} _\mathrm{min}(Z_A^{n_K} \mid W E)_{\rho_{\vert\Omega(\vec n)}}\geq n_K - H^{\sqrt{\kappa(\vec n)}}_\mathrm{max}(X_A^{n_K} \mid X_B^{n_K})_{\rho^\mathrm{virt}_{\vert\Omega(\vec n)}} \geq n_K \big[1-h\big(\mathcal{E}_\mathrm{ph} (\vec n;N,\epsilon)\big)\big],
		\end{equation}
		where $E$ is the system containing Eve's side information (see Step 2 in the Actual Protocol (source replaced)). Then, security follows directly from \cite[Theorem 4]{tupkaryPhaseError2025} after identifying $i \mapsto \vec n$, $j \mapsto \varnothing$, $\beta_i \mapsto n_K \big[1-h\big(\mathcal{E}_\mathrm{ph} (\vec n;N,\epsilon)\big)\big]$, $\kappa_{(i,j)} \mapsto \kappa(\vec n)$, $\varepsilon_\mathrm{AT}^2 \mapsto \epsilon$, $\vec Z \mapsto Z_A^{n_K}$, $\vec C \mapsto W$ and $\vec E \mapsto E$. 
	\end{proof}
	
	\begin{remark}
		Note that \cref{thm:variable_length_security_BB84_type} applies regardless of the form of Alice's source-replacement state $\ket*{\Psi_N}_{A_1^N T_1^N}$, and in particular, it holds even if Alice's source is correlated across rounds (see \cref{eqapp:source_replacement_state}). To our knowledge, this is the first time that this is explicitly established.
	\end{remark}
	
	\setcounter{corollary}{0}
	\setcounter{lemma}{0}

	\section{Extending existing phase-error bounds to incorporate encoding correlations}
	\label{app:main_results}
	
	In this section, we present our general framework to extend a phase-error-estimation-based security proof for imperfect but uncorrelated sources to handle encoding correlations as well. Our approach builds upon the round-partitioning strategy introduced in previous works \cite{pereiraQuantumKey2020,mizutaniSecurityRoundrobin2021,pereiraModifiedBB842023,curras-lorenzoSecurityFramework2025}, but significantly generalizes it, puts it into a more rigorous theoretical foundation, and resolves many of its fundamental limitations highlighted in the Introduction of the main text. 
	
	The core insight underlying our framework is that correlations of length $l_c$ create dependencies only between rounds whose indices differ by at most $l_c$. By partitioning the protocol rounds into $(l_c+1)$ groups according to $I_w = \{k : k \equiv w \pmod{l_c+1}\}$, we ensure that rounds within each group are separated by at least $(l_c+1)$ positions, effectively eliminating direct correlations between them. This allows us to apply the uncorrelated analysis separately to each partition $I_w$ to establish an upper bound on its phase-error rate. 
	
	Then, we rigorously show that these individual phase-error-rate upper bounds can be combined to establish an upper bound on the \textit{overall} phase-error rate. This is the key difference with respect to previous works, which consider a separate security proof and privacy amplification procedure for each partition, and then argue that the separate security proofs could be combined through composability arguments. The latter approach introduces both practical complications and theoretical concerns, as discussed in the Introduction of the main text, which our approach overcomes.
	
	We present our results in a hierarchical structure, progressing from the most general formulation to increasingly specific and practical cases:
	
	\begin{enumerate}
		\item \textbf{Theorem \ref{thm:main}} establishes the most general result, applicable to a general prepare-and-measure protocol where the uncorrelated security proof imposes general admissibility conditions on the global quantum state emitted for each possible sequence of setting choices. This theorem forms the mathematical foundation of our framework.
		
		\item \textbf{Corollary \ref{cor:per_round}} specializes the general theorem to the natural case in which the uncorrelated proof imposes admissibility conditions on the states emitted in individual rounds, rather than on the global state. This formulation matches the structure of most existing security proofs considering partially-characterized encoding imperfections and side channels. This is the result highlighted in the main text, as we consider it to be the most useful result in practice.
		
		\item \textbf{Corollary \ref{cor:fidelity bound}} further specializes it to the practically important case where the admissibility condition is a fidelity bound between the actually emitted states and some reference states. This is considered in \cite{curras-lorenzoSecurityFramework2025}, and equivalent conditions are considered in \cite{pereiraQuantumKey2019,pereiraQuantumKey2020,navarretePracticalQuantum2021,pereiraModifiedBB842023}.
		
		\item \textbf{Lemma \ref{lem:trace_distance_bound}} extends the framework to obtain a phase-error-rate upper bound even with unbounded correlation lengths. 
	\end{enumerate}

	For concreteness, our results focus on prepare-and-measure protocols. However, they extend naturally to interference-based protocols (also known as MDI-type protocols), see \cref{rem:interference_based} at the end of the Appendix.

	\begin{theorem}
		\label{thm:main}
		Consider a prepare-and-measure QKD protocol with an uncorrelated source where Alice emits a global state
		\begin{equation}
			\ket*{\Psi_{j_1^N}}_{T_1^N} = \bigotimes_{k=1}^N \ket*{\psi_{j_k}^{(k)}}_{T_k},
		\end{equation}
		when choosing a full sequence of setting choices $j_1^N \in \mathcal{J}^N$.
		Suppose that, for each $N$, there exists an admissibility set $\mathcal{S}_N$ of state families indexed by $j_1^N$ such that the phase-error rate bound
		\begin{equation}
			\label{eq:general_eph_bound_2}
			\Pr[\bm\eph > \mathcal{E}_\mathrm{ph} (\bm{\vec n};N,\epsilon)] \leq \epsilon,
		\end{equation}
		holds for any eavesdropping attack as long as
		\begin{equation}
			\label{eq:global_admissibility_condition_uncorrelated}
			\{\ket*{\Psi_{j_1^N}}_{T_1^N}\}_{j_1^N \in \mathcal{J}^N} \in \mathcal{S}_N,
		\end{equation}
		where $\bm{\vec n}$ represents the random vector containing all the announced data. 
		
		Now consider the analogous protocol with a source exhibiting correlations up to length $l_c$, where Alice emits a global state \begin{equation}
			\ket*{\Psi'_{j_1^N}}_{T_1^N} = \bigotimes_{k=1}^N \ket*{\psi_{j_{k-l_c}^k}^{(k)}}_{T_k},
		\end{equation}
		conditional on choosing a sequence of setting choices $j_1^N \in \mathcal{J}^N$.
		
		Partition the rounds $\{1,...,N\}$ into $(l_c+1)$ sets according to $I_w = \{k : k \equiv w \mod{l_c+1}\}$, and define also the complementary sets $I_{\bar w} = \{k: k \not\equiv w \mod{l_c+1}\}$. Then, for each $w$, define the subsequences $j_{I_w}$ and $j_{I_{\bar w}}$ of $j_1^N$ indexed by $I_w$ and $I_{\bar w}$, respectively. Suppose that, for every $w$ and every choice of $j_{I_{\bar w}}$, there exists an isometry $V_{T_{I_w} \to T_1^N}^{j_{I_{\bar w}}}: \mathcal{H}_{T_{I_w}} \to \mathcal{H}_{T_1^N}$ such that
		\begin{equation}
			\label{eq:global_admissibility_condition_correlated}
			\left\{ \big(V_{T_{I_w} \to T_1^N}^{j_{I_{\bar w}}} \big)^{\dagger} \ket*{\Psi'_{j_1^N}}_{T_1^N} \right\}_{j_{I_w} \in \mathcal{J}^{N^{(w)}}} \in  \mathcal{S}_{N^{(w)}},
		\end{equation}
		where $N^{(w)} = |I_w|$ and  $\mathcal{S}_{N^{(w)}}$ is the admissibility set defined by the uncorrelated proof for a protocol with $N^{(w)}$ transmitted rounds. 
		
		Then, the phase-error rate in the correlated scenario satisfies
		\begin{equation}
			\label{eq:general_eph_bound_correl_2}
			\Pr[\bm\eph > \frac{\sum_{w=0}^{l_c} \bm{n_K^{(w)}} \mathcal{E}_{\mathrm{ph}}  (\bm{\vec n^{(w)}};N^{(w)},\epsilon)}{\bm{n_K}}]  \leq (l_c+1)\epsilon,
		\end{equation}
		where $\bm{\vec n^{(w)}}$ is the restriction of the announced data vector $\bm{\vec n}$ to rounds in $I_w$, $\bm{n_K^{(w)}}$ is the number of sifted key bits from rounds in $I_w$, and $\bm{n_K}=\sum_{w=0}^{l_c} \bm{n_K^{(w)}}$ is the total number of bits in the sifted key.
	\end{theorem}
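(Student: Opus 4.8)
The plan is to prove an unconditional phase-error bound for each partition $I_w$ in isolation and then to combine the $(l_c+1)$ bounds via a union bound, exploiting that the overall phase-error rate is the sifted-key-weighted average of the per-partition rates. I denote by $\bm{e^{(w)}_{\mathrm{ph}}}$ the phase-error rate computed from the rounds of $I_w$ alone; since each sifted-key round lies in exactly one partition, $\bm\eph = \sum_{w=0}^{l_c}\bm{n_K^{(w)}}\bm{e^{(w)}_{\mathrm{ph}}}/\bm{n_K}$.

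First I would fix $w$ and augment the \textbf{Phase-error estimation protocol} with a harmless extra measurement: Alice measures all complementary systems $A_{I_{\bar w}}$ in the computational basis $\{\ket{j_k}_{A_k}\}$ \emph{before} performing the $\{\ket{+},\ket{-}\}$ measurements on $A_{I_w}$. Because these measurements act on systems disjoint from $A_{I_w}$ and from all of Bob's registers, they commute with every measurement that defines $\bm{e^{(w)}_{\mathrm{ph}}}$; marginalizing over the outcomes $\bm{j_{I_{\bar w}}}$ therefore leaves the distribution of $\bm{e^{(w)}_{\mathrm{ph}}}$ unchanged, so it suffices to bound $\bm{e^{(w)}_{\mathrm{ph}}}$ in this augmented scenario.

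Next I would condition on a fixed outcome $\bm{j_{I_{\bar w}}} = j_{I_{\bar w}}$. Because consecutive rounds of $I_w$ are separated by at least $l_c+1$ positions, fixing the complementary settings leaves each round $k\in I_w$ emitting a state that depends only on its own setting $j_k$; hypothesis~\eqref{eq:global_admissibility_condition_correlated} packages this by supplying an isometry $V_{T_{I_w}\to T_1^N}^{j_{I_{\bar w}}}$ whose adjoint sends the correlated states $\{\ket*{\Psi'_{j_1^N}}_{T_1^N}\}_{j_{I_w}}$ to an admissible uncorrelated family in $\mathcal{S}_{N^{(w)}}$. Since this adjoint image is a family of \emph{normalized} states and $V$ is an isometry, each $\ket*{\Psi'_{j_1^N}}_{T_1^N}$ must lie in the range of $V$, so $V V^{\dagger}$ acts as the identity on them and the conditional source-replacement state equals $(\mathbb{I}_{A_{I_w}}\otimes V^{j_{I_{\bar w}}})\ket*{\tilde\Psi^{(w)}}_{A_{I_w}T_{I_w}}$, where $\ket*{\tilde\Psi^{(w)}}$ is precisely the source-replacement state of an uncorrelated $N^{(w)}$-round protocol obeying the admissibility condition. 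I would then absorb the setting-dependent isometry $V^{j_{I_{\bar w}}}$ into Eve's attack channel: whatever Eve's real channel on $T_1^N$ is, pre-composing it with $V^{j_{I_{\bar w}}}$ yields a legitimate attack on the uncorrelated subprotocol, so the black-box bound~\eqref{eq:general_eph_bound_2} --- assumed to hold against any attack and for any round count, here with $N\to N^{(w)}$ --- gives $\Pr[\bm{e^{(w)}_{\mathrm{ph}}} > \mathcal{E}_\mathrm{ph}(\bm{\vec n^{(w)}};N^{(w)},\epsilon)\mid j_{I_{\bar w}}]\le\epsilon$. Averaging over $\bm{j_{I_{\bar w}}}$ removes the conditioning and yields the unconditional per-partition bound.

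The final step is routine: if every partition satisfied $\bm{e^{(w)}_{\mathrm{ph}}}\le\mathcal{E}_\mathrm{ph}(\bm{\vec n^{(w)}};N^{(w)},\epsilon)$, then the weighted average $\bm\eph$ would obey the same inequality, so the event in~\eqref{eq:general_eph_bound_correl_2} forces at least one partition to violate its bound, and a union bound over $w=0,\dots,l_c$ gives total failure probability at most $(l_c+1)\epsilon$. I expect the crux to be the third step --- rigorously certifying that, after conditioning on $j_{I_{\bar w}}$ and absorbing the secret-dependent isometry, the partition becomes a \emph{bona fide} instance of the uncorrelated protocol to which~\eqref{eq:general_eph_bound_2} applies verbatim, including checking that the restricted announcements $\bm{\vec n^{(w)}}$ play exactly the role of the uncorrelated protocol's announced data and that fixing the complementary settings introduces no side information that would invalidate the borrowed bound.
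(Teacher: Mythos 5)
Your proposal is correct and takes essentially the same route as the paper's proof: condition on the complementary settings $j_{I_{\bar w}}$, use normalization of the adjoint image (plus the isometry property) to conclude that $V V^\dagger$ acts as the identity on the emitted states so that $V^{j_{I_{\bar w}}}$ can be absorbed into Eve's attack channel, apply the uncorrelated black-box bound with $N \to N^{(w)}$, average over $j_{I_{\bar w}}$, and finish with the weighted-average/union-bound argument. The only cosmetic difference is that you augment the full phase-error estimation protocol with commuting computational-basis measurements on $A_{I_{\bar w}}$, whereas the paper defines a separate $w$-th estimation protocol and argues that the marginal statistics of $(\bm{\eph^{(w)}}, \bm{\vec n^{(w)}})$ coincide across the two protocols --- the same disjoint-systems argument in different clothing.
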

	
	\begin{proof}
		
		Let us first review the phase-error estimation protocol for the uncorrelated scenario. The global source-replacement state generated by Alice can be written as
		\begin{equation}
			\label{eq:thm1_global_state_no_corr}
			\ket{\Psi_N}_{A_1^N T_1^N} =  \sum_{j_1^N} \sqrt{\Pr[j_1^N]}  \ket{j_1^N}_{A_1^N} \ket*{\Psi_{j_1^N}}_{T_1^N},
		\end{equation}
		where $\Pr[j_1^N] = \prod_{k=1}^N p_{j_k}$. Then, Eve applies her global isometry $V_{T_1^N \to B_1^N E}$ and sends systems $B_1^N$ to Bob. We are interested in the state shared by Alice and Bob after Eve's attack, and thus we define a completely-positive trace-preserving (CPTP) map $\Phi_{T_1^N \to B_1^N}$ that consists of first applying $V_{T_1^N \to B_1^N E}$ and then tracing out Eve's ancillary system $E$. Thanks to this, we can write the state shared by Alice and Bob after Eve's attack as
		\begin{equation}
			\rho_{A_1^N B_1^N} = \Phi_{T_1^N \to B_1^N} \Big(\ketbra{\Psi_N}_{A_1^N T_1^N}\Big).
		\end{equation}
		
		Next, Alice and Bob perform measurements on their systems $A_1^N B_1^N$, through which they will learn the values of $\bm\eph$ and $\bm{\vec n}$. We can define a simple two-outcome POVM $\{M_{A_1^N B_1^N}^{\leq,\epsilon},M_{A_1^N B_1^N}^{>,\epsilon}\}$ that only checks whether $\bm \eph \leq \mathcal{E}_\mathrm{ph} (\bm{\vec n};N,\epsilon)$ or $\bm\eph > \mathcal{E}_\mathrm{ph} (\bm{\vec n};N,\epsilon)$. Using this, we can restate the phase-error rate bound in \cref{eq:general_eph_bound_2} as the following guarantee: if Alice generates the global state in \cref{eq:thm1_global_state_no_corr} and \cref{eq:global_admissibility_condition_uncorrelated} holds, then, for any CPTP map $\Phi_{T_1^N \to B_1^N}$,
		\begin{equation}
			\label{eq:uncorrelated_proof_guarantee}
			\Tr[M_{A_1^N B_1^N}^{>,\epsilon}\Phi_{T_1^N \to B_1^N} \Big(\ketbra{\Psi_N}_{A_1^N T_1^N}\Big)] \leq \epsilon.
		\end{equation}
		
		Now, let's consider the analogous scenario with a correlated source. The global source-replacement state generated by Alice is now
		\begin{equation}
			\label{eq:thm1_global_state_corr}
			\ket{\Psi'_N}_{A_1^N T_1^N} = \sum_{j_1^N} \sqrt{\Pr[j_1^N]}  \ket{j_1^N}_{A_1^N} \ket*{\Psi'_{j_1^N}}_{T_1^N}.
		\end{equation}
		
		Our strategy to upper bound the phase-error rate in the correlated scenario is to partition the protocol rounds into $(l_c+1)$ sets $I_w$ indexed by $w \in \{0,1,\ldots,l_c\}$, and upper-bounding the phase-error rate of each partition $I_w$ independently. To achieve this, we will show that each partition satisfies the conditions of the uncorrelated scenario when conditioning on any value of the settings $j_{I_{\bar w}}$ of the complementary set $I_{\bar w}$. As a tool to upper bound the phase-error rate of each partition $I_w$, we introduce a modified protocol in which Alice and Bob perform the phase-error measurements for the sifted key rounds in $I_w$, but perform the actual bit measurements for the rounds in $I_{\bar w}$.
		
		\vspace{3pt}
		
		\begin{mdframed}
			\noindent\textbf{$w$-th phase-error estimation protocol (PEEP)} (defined for each $w \in \{0,1,\ldots,l_c\}$)
			
			\begin{enumerate}
				\item[1-5.] Same as in \textbf{Actual protocol (source replaced)}
				
				\item[6.] \textit{Measurements in sifted-key rounds:} Define the set of rounds $I_w = \{k : k \equiv w \pmod{l_c+1}\}$ and its complement $I_{\bar{w}} = \{k : k \not\equiv w \pmod{l_c+1}\}$.
				\begin{enumerate}
					\item \textit{Bit measurements for rounds in $I_{\bar{w}}$:} For each round $k \in \mathcal{D}_\mathrm{key} \cap I_{\bar{w}}$, Alice measures $A_k$ in $\{\ket*{0}_{A_k},\ket*{1}_{A_k}\}$ and Bob measures $\{G_{0},G_{1}\}$.

					\item \textit{Phase-error measurements for rounds in $I_w$:} For each round $k \in \mathcal{D}_\mathrm{key} \cap I_w$, Alice measures $A_k$ in $\{\ket*{+}_{A_k},\ket*{-}_{A_k}\}$ and Bob measures $\{G_{+},G_{-}\}$. We denote the phase-error rate of the $w$-th partition $\bm{e_\mathrm{ph}}^{(w)}$ as the fraction of events in which their outcomes differ.
				\end{enumerate}
			\end{enumerate}
		\end{mdframed}
		\vspace{5pt}
		
		To obtain a statistical bound on $\bm{e_\mathrm{ph}}^{(w)}$, we consider a scenario equivalent to the above in which Alice and Bob perform their actions in a different order. First, Alice generates the global state in \cref{eq:thm1_global_state_corr}, and then Eve applies her global isometry $V'_{T_1^N \to B_1^N E}$, which we can regard as a CPTP map $\Phi'_{T_1^N \to B_1^N}$ by tracing out system $E$. Next, Alice performs all her measurements for the rounds in $I_{\bar w}$, learning her setting choice sequence $j_{I_{\bar w}}$ for these rounds. Then Alice and Bob perform their measurements for the rounds in $I_{w}$, learning the value of $\bm{e_\mathrm{ph}}^{(w)}$ and $\bm{\vec n}^{(w)}$ (the restriction of $\bm{\vec n}$ to the rounds in $I_w$). Finally, Bob performs all his measurements for the rounds in $I_{\bar w}$. For this reordered scenario, the (unnormalized) state shared by Alice and Bob after Alice measures the rounds in $I_{\bar w}$ conditional on obtaining a setting choice sequence $j_{I_{\bar w}}$, after tracing out all the systems for the $I_{\bar w}$ rounds, is given by:
		\begin{equation}
			\label{eq:rho_prime_unnormalized_derivations}
			\begin{aligned}
				&\Tr_{B_{I_{\bar w}}} \left[\bra{j_{I_{\bar w}}}_{A_{I_{\bar w}}} \Phi'_{T_1^N \to B_1^N} \Big(\ketbra{\Psi'_N}_{A_1^N T_1^N}\Big) \ket{j_{I_{\bar w}}}_{A_{I_{\bar w}}}\right] \\
				&=  \Tr_{B_{I_{\bar w}}} \left[\Phi'_{T_1^N \to B_1^N} \Big(\bra{j_{I_{\bar w}}}_{A_{I_{\bar w}}} \ketbra{\Psi'_N}_{A_1^N T_1^N} \ket{j_{I_{\bar w}}}_{A_{I_{\bar w}}}\Big)\right] \\
				&=  \Pr[j_{I_{\bar w}}] \Tr_{B_{I_{\bar w}}} \left[\Phi'_{T_1^N \to B_1^N} \Big(\ketbra*{\Psi''_{j_{I_{\bar w}}}}_{A_{I_w} T_1^N}\Big)\right] \\
				&= \Pr[j_{I_{\bar w}}] \, \Phi''_{T_1^N \to B_{I_w}} \Big(\ketbra*{\Psi''_{j_{I_{\bar w}}}}_{A_{I_w} T_1^N}\Big),
			\end{aligned}
		\end{equation}
		where we have defined
		\begin{equation}
			\ket*{\Psi''_{j_{I_{\bar w}}}}_{A_{I_w} T_1^N} = \sum_{j_{I_w}} \sqrt{\Pr[j_{I_w}]} \ket{j_{I_w}}_{A_{I_w}} \ket*{\Psi'_{j_1^N}}_{T_1^N},
		\end{equation}
		and
		\begin{equation}
			\Phi''_{T_1^N \to B_{I_w}} = \Tr_{B_{I_{\bar w}}} \circ\,\,\Phi'_{T_1^N \to B_1^N}.
		\end{equation}
		Note that in the second equality in \cref{eq:rho_prime_unnormalized_derivations} we have used the fact that $\Pr[j_1^N] = \Pr[j_{I_w}]\Pr[j_{I_{\bar w}}]$, since all of Alice's setting choices are independent of one another.
		
		The normalized state conditional on the outcome $j_{I_{\bar w}}$ can thus be written as
		\begin{equation}
			\label{eq:rho_prime_w}
			\rho_{A_{I_w} B_{I_w}}^{\prime,j_{I_{\bar w}}} = \Phi''_{T_1^N \to B_{I_w}} \Big(\ketbra*{\Psi''_{j_{I_{\bar w}}}}_{A_{I_w} T_1^N}\Big).
		\end{equation}
		Now, consider the measurements performed by Alice and Bob on the rounds in $I_w$, through which they learn the values of $\bm{e_\mathrm{ph}}^{(w)}$ and $\bm{\vec n}^{(w)}$. Again, we can define a simple two-outcome POVM $\{M_{A_{I_w} B_{I_w}}^{\leq,\epsilon,(w)},M_{A_{I_w} B_{I_w}}^{>,\epsilon,(w)}\}$ that only checks whether $\bm \eph^{(w)} \leq \mathcal{E}_\mathrm{ph} (\bm{\vec n}^{(w)};N^{(w)},\epsilon)$ or $\bm\eph^{(w)} > \mathcal{E}_\mathrm{ph} (\bm{\vec n}^{(w)};N^{(w)},\epsilon)$. By doing so, we can write
		\begin{equation}
			\label{eq:ephw_def_conditional}
			\Pr[\bm{e_\mathrm{ph}^{(w)}} > \mathcal{E}_\mathrm{ph} (\bm{\vec n^{(w)}};N^{(w)},\epsilon) \mid j_{I_{\bar w}}]  = \Tr[M_{A_{I_w} B_{I_w}}^{>,\epsilon,(w)} \rho_{A_{I_w} B_{I_w}}^{\prime,j_{I_{\bar w}}}].
		\end{equation}
		
		Next, we show that \cref{eq:ephw_def_conditional} can be rewritten in such a way that it becomes equivalent to \cref{eq:uncorrelated_proof_guarantee} in the uncorrelated case. Consider the isometry $V_{T_{I_w} \to T_1^N}^{j_{I_{\bar w}}}$ defined in the theorem statement for the specific result $j_{I_{\bar w}}$, and let us define
		\begin{equation}
			\label{eq:Psi_primeprimeprime}
			\begin{gathered}
				\ket*{\Psi'''_{j_{I_{\bar w}}}}_{A_{I_w} T_{I_w}} = (V_{T_{I_w} \to T_1^N}^{j_{I_{\bar w}}})^{\dagger} \ket*{\Psi''_{j_{I_{\bar w}}}}_{A_{I_w} T_1^N} = \sum_{j_{I_w}} \sqrt{\Pr[j_{I_w}]} \ket{j_{I_w}}_{A_{I_w}} (V_{T_{I_w} \to T_1^N}^{j_{I_{\bar w}}})^{\dagger}  \ket*{\Psi'_{j_1^N}}_{T_1^N}.
			\end{gathered}
		\end{equation}
		Note that since $\{(V_{T_{I_w} \to T_1^N}^{j_{I_{\bar w}}})^{\dagger} \ket*{\Psi'_{j_1^N}}_{T_1^N}\}_{j_{I_w}} \in  \mathcal{S}_{N^{(w)}}$ by assumption and all states in $\mathcal{S}_{N^{(w)}}$ are normalized, \cref{eq:Psi_primeprimeprime} is still a valid normalized state. %Moreover, this state is equivalent to the uncorrelated source replacement state in \cref{eq:thm1_global_state_no_corr} for a protocol with $N^{(w)}$ rounds.
		Using \cref{eq:Psi_primeprimeprime}, we can rewrite \cref{eq:rho_prime_w} as
		\begin{equation}
			\label{eq:relation_phiprimeprimeprime}
			\begin{aligned}
				\rho_{A_{I_w} B_{I_w}}^{\prime,j_{I_{\bar w}}} &=  \Phi''_{T_1^N \to B_{I_w}} \Big(\ketbra*{\Psi''_{j_{I_{\bar w}}}}_{A_{I_w} T_1^N}\Big)   \\
				&=  \Phi''_{T_1^N \to B_{I_w}} \Big(V_{T_{I_w} \to T_1^N}^{j_{I_{\bar w}}}(V_{T_{I_w} \to T_1^N}^{j_{I_{\bar w}}})^{\dagger}\ketbra*{\Psi''_{j_{I_{\bar w}}}}_{A_{I_w} T_1^N} V_{T_{I_w} \to T_1^N}^{j_{I_{\bar w}}} (V_{T_{I_w} \to T_1^N}^{j_{I_{\bar w}}})^{\dagger}\Big) \\
				&=  \Phi''_{T_1^N \to B_{I_w}} \Big(V_{T_{I_w} \to T_1^N}^{j_{I_{\bar w}}}\ketbra*{\Psi'''_{j_{I_{\bar w}}}}_{A_{I_w} T_{I_w}} (V_{T_{I_w} \to T_1^N}^{j_{I_{\bar w}}})^\dagger\Big) \\
				&= \Phi'''_{T_{I_w} \to B_{I_w}} \Big(\ketbra*{\Psi'''_{j_{I_{\bar w}}}}_{A_{I_w} T_{I_w}}\Big),
			\end{aligned}
		\end{equation}%
		where we have defined
		\begin{equation}
			\Phi^{\prime\prime\prime,j_{I_{\bar w}}}_{T_{I_w} \to B_{I_w}}(\sigma) = \Phi''_{T_1^N \to B_{I_w}}\Big(V_{T_{I_w} \to T_1^N}^{j_{I_{\bar w}}} \,\sigma \,\big(V_{T_{I_w} \to T_1^N}^{j_{I_{\bar w}}}\big)^{\dagger}\Big).
		\end{equation}
		
		Substituting \cref{eq:relation_phiprimeprimeprime} into \cref{eq:ephw_def_conditional}, we obtain
		\begin{equation}
			\label{eq:ephw_def_conditional_2}
			\Pr[\bm{e_\mathrm{ph}^{(w)}} > \mathcal{E}_\mathrm{ph} (\bm{\vec n^{(w)}};N^{(w)},\epsilon) \mid j_{I_{\bar w}}]  =  \Tr[M_{A_{I_w} B_{I_w}}^{>,\epsilon,(w)} \Phi^{\prime\prime\prime,j_{I_{\bar w}}}_{T_{I_w} \to B_{I_w}} \Big(\ketbra*{\Psi'''_{j_{I_{\bar w}}}}_{A_{I_w} T_{I_w}}\Big)].
		\end{equation}
		
		Note that the state $\ket*{\Psi'''_{j_{I_{\bar w}}}}_{A_{I_w} T_{I_w}}$ in \cref{eq:Psi_primeprimeprime} has precisely the form of the uncorrelated source-replacement state in \cref{eq:thm1_global_state_no_corr} for a protocol with $N^{(w)}= |I_w|$ rounds, where the family of states $ \{\ket*{\Psi_{j_1^N}}_{T_1^N}\}_{j_1^N \in \mathcal{J}^N} \in \mathcal{S}_N$ in \cref{eq:thm1_global_state_no_corr} has been replaced by the family of states $\{(V_{T_{I_w} \to T_1^N}^{j_{I_{\bar w}}})^{\dagger} \ket*{\Psi'_{j_1^N}}_{T_1^N}\}_{j_{I_w} \in \mathcal{J}^{N^{(w)}}} \in \mathcal{S}_{N^{(w)}}$ in \cref{eq:Psi_primeprimeprime}. Because of this, \cref{eq:ephw_def_conditional_2} has exactly the same form as \cref{eq:uncorrelated_proof_guarantee} (which is valid for any $N$ and holds for any CPTP map), and thus it follows that
		\begin{equation}
			\label{eq:ephw_bound_conditional}
			\Pr[\bm{e_\mathrm{ph}^{(w)}} > \mathcal{E}_\mathrm{ph} (\bm{\vec n^{(w)}};N^{(w)},\epsilon) \mid j_{I_{\bar w}}] \leq \epsilon.
		\end{equation}
		Also, applying the law of total probability, we find that
		\begin{equation}
			\label{eq:ephw_bound_unconditional}
			\begin{gathered}
				\Pr[\bm{e_\mathrm{ph}^{(w)}} > \mathcal{E}_\mathrm{ph} (\bm{\vec n^{(w)}};N^{(w)},\epsilon)] = \sum_{j_{I_{\bar w}}} \Pr[j_{I_{\bar w}}]\Pr[\bm{e_\mathrm{ph}^{(w)}} > \mathcal{E}_\mathrm{ph} (\bm{\vec n^{(w)}};N^{(w)},\epsilon) \mid j_{I_{\bar w}}] \\
				\leq \sum_{j_{I_{\bar w}}} \Pr[j_{I_{\bar w}}] \,\epsilon   = \epsilon,
			\end{gathered}
		\end{equation}
		and thus the bound holds even when removing the conditioning on $j_{I_{\bar w}}$.
		
		Note that we have derived this result for the $w$-th phase-error estimation protocol, which we now write explicitly
		\begin{equation}
			\Pr_\textrm{$w$-th PEEP}[\bm{e_\mathrm{ph}^{(w)}} > \mathcal{E}_\mathrm{ph} (\bm{\vec n^{(w)}};N^{(w)},\epsilon)] \leq \epsilon.
		\end{equation}
		
		However, what we want is to bound the phase-error rate for the original \textit{Phase-error estimation protocol} defined in \cref{app:sec_framework_eph}. To connect the two, let us define:
		
		\vspace{3pt}

		\begin{mdframed}
			\noindent\textbf{Full phase-error estimation protocol (PEEP):}
			
			\begin{enumerate}
				\item[1-5.] Same as in \textbf{Actual protocol (source replaced)}
				
				\item[6.] \textit{Measurements in sifted-key rounds:} Partition the rounds $k \in \{1,...,N\}$ into the sets $I_w = \{k : k \equiv w \pmod{l_c+1}\}$. Then:
				\begin{enumerate}        
					\item \textit{Phase-error measurements for rounds in $I_w$:} For each round $k \in \mathcal{D}_\mathrm{key} \cap I_w$, Alice measures $A_k$ in $\{\ket*{+}_{A_k},\ket*{-}_{A_k}\}$ and Bob measures $\{G_{+},G_{-}\}$. We denote the phase-error rate of the $w$-th partition $\bm{e_\mathrm{ph}}^{(w)}$ as the fraction of events in which their outcomes differ.
				\end{enumerate}
			\end{enumerate}
		\end{mdframed}
		\vspace{5pt}
		
		Note that this scenario is essentially identical to the original \textit{Phase-error estimation protocol} defined in \cref{app:sec_framework_eph}. The only difference is that the phase-error rate is tracked separately for different $I_w$. However, we can define the overall phase-error rate as
		\begin{equation}
			\label{eq:eph_def_full_PEEP}
			\bm{\eph} = \frac{\sum_{w=0}^{l_c} \bm{n_K^{(w)}} \bm{\eph^{(w)}}}{\bm{n_K}},
		\end{equation}
		where $\bm{n_K^{(w)}}$ is the number of sifted key bits from rounds in $I_w$ and $\bm{n_K} = \sum_{w=0}^{l_c} \bm{n_K^{(w)}}$ is the total number of sifted key bits. Importantly, the statistics of  $\bm\eph$ in this scenario must be identical as in the original \textit{Phase-error estimation protocol} defined in \cref{app:sec_framework_eph}. Moreover, note that, for all $w \in \{0,1,\ldots,l_c\}$, 
		\begin{equation}
			\label{eq:bound_ephw_full_PEEP}
			\Pr_{\textrm{Full PEEP}}[\bm{\eph^{(w)}} > \mathcal{E}_\mathrm{ph} (\bm{\vec n^{(w)}};N^{(w)},\epsilon)] = \Pr_{\textrm{$w$-th PEEP}}[\bm{\eph^{(w)}} > \mathcal{E}_\mathrm{ph} (\bm{\vec n^{(w)}};N^{(w)},\epsilon)] \leq \epsilon.
		\end{equation}
		This is because the statistics of the random variables $\bm{\eph^{(w)}}$ and $\bm{\vec{n}^{(w)}}$ depend only on the marginal state on systems $A_{I_{ w}}B_{I_{w}}$ and on the measurements performed in these rounds. Since the Full PEEP and the $w$-th PEEP only differ in the measurements performed on the complementary systems $A_{I_{\bar w}}B_{I_{\bar w}}$, and these measurements do not affect the marginal state on $A_{I_{ w}}B_{I_{w}}$ (since Alice and Bob never perform any operation on systems $A_{I_{ w}}B_{I_{w}}$ depending on the outcome of the measurements on $A_{I_{\bar w}}B_{I_{\bar w}}$), the \textit{a priori} distribution of $(\bm{\eph^{(w)}}, \bm{\vec{n}^{(w)}})$ must be identical in both scenarios.

		Now, for the \textit{Full PEEP}, define the following event specified by a relationship between random variables,
		\begin{equation}
			\mathcal{B} = \left\{\bm{\eph} > \frac{\sum_{w=0}^{l_c} \bm{n_K^{(w)}} \mathcal{E}_{\mathrm{ph}}(\bm{\vec n^{(w)}};N^{(w)},\epsilon)}{\bm{n_K}}\right\}.
		\end{equation}
		
		From the definition of $\bm\eph$ in \cref{eq:eph_def_full_PEEP}, for this event to occur, we must have
		\begin{equation}
			\sum_{w=0}^{l_c} \bm{n_K^{(w)}} \bm{\eph^{(w)}} > \sum_{w=0}^{l_c} \bm{n_K^{(w)}} \mathcal{E}_{\mathrm{ph}}(\bm{\vec n^{(w)}};N^{(w)},\epsilon).
		\end{equation}
		
		This implies that for at least one value of $w \in \{0,1,\ldots,l_c\}$, we must have $\bm{\eph^{(w)}} > \mathcal{E}_{\mathrm{ph}}(\bm{\vec n^{(w)}};N^{(w)},\epsilon)$. Otherwise, if $\bm{\eph^{(w)}} \leq \mathcal{E}_{\mathrm{ph}}(\bm{\vec n^{(w)}};N^{(w)},\epsilon)$ for all $w$, then
		\begin{equation}
			\sum_{w=0}^{l_c} \bm{n_K^{(w)}} \bm{\eph^{(w)}} \leq \sum_{w=0}^{l_c} \bm{n_K^{(w)}} \mathcal{E}_{\mathrm{ph}}(\bm{\vec n^{(w)}};N^{(w)},\epsilon),
		\end{equation}
		which would contradict the occurrence of event $\mathcal{B}$. In other words, we have that\footnote{Note that, if $\bm{n_K^{(w)}} = 0$, $\bm{\eph^{(w)}}$ is not well-defined. Here, we are trivially considering that if $\bm{n_K^{(w)}} = 0$, we set $\bm{\eph^{(w)}} \coloneqq 0$ and $\mathcal{E}_{\mathrm{ph}}(\bm{\vec n^{(w)}};N^{(w)},\epsilon) \coloneqq 0$.} 
		\begin{equation}
			\mathcal{B} \subseteq \bigcup_{w=0}^{l_c} \left\{\bm{\eph^{(w)}} > \mathcal{E}_{\mathrm{ph}}(\bm{\vec n^{(w)}};N^{(w)},\epsilon)\right\},
		\end{equation}
		and applying the union bound, we obtain
		\begin{equation}
			\begin{aligned}
				\Pr[\mathcal{B}] &\leq \sum_{w=0}^{l_c} \Pr\left[\bm{\eph^{(w)}} > \mathcal{E}_{\mathrm{ph}}(\bm{\vec n^{(w)}};N^{(w)},\epsilon)\right] \leq \sum_{w=0}^{l_c} \epsilon 
				= (l_c + 1)\epsilon,
			\end{aligned}
		\end{equation}
		where the second inequality follows from \cref{eq:bound_ephw_full_PEEP} for each $w$. Therefore, in the \textit{Full PEEP} (and thus also in the original \textit{Phase-error estimation protocol} defined in \cref{app:sec_framework_eph}),
		\begin{equation}
			\Pr\left[\bm{\eph} > \frac{\sum_{w=0}^{l_c} \bm{n_K^{(w)}} \mathcal{E}_{\mathrm{ph}}(\bm{\vec n^{(w)}};N^{(w)},\epsilon)}{\bm{n_K}}\right] \leq (l_c+1)\epsilon,
		\end{equation}
		as we wanted to prove. 
	\end{proof}

	\begin{corollary}[Per-round admissibility conditions]
		\label{cor:per_round}
		Consider a prepare-and-measure QKD protocol with an uncorrelated source, where Alice emits a state $\ket*{\psi_{j_k}^{(k)}}_{T_k}$ when choosing setting $j_k \in \mathcal{J}$ in round $k$. Suppose there exists an admissibility set $\mathcal{S}$ of state families indexed by $j \in \mathcal{J}$ such that the phase-error bound in \cref{eq:general_eph_bound} holds as long as
		\begin{equation}
			\big\{\ket*{\psi^{(k)}_{j_k}} \big\}_{j_k \in \mathcal{J}} \in \mathcal{S}, \quad \forall k.
		\end{equation}

		Now consider the analogous protocol with a source exhibiting correlations up to length $l_c$. For any sequence of settings $j_{k-l_c}^{k+l_c}$ (interpreted with the boundary conventions in \cref{rem:boundaries} below), define the joint state emitted in rounds $k$ to $k+l_c$ as
		\begin{equation}
			\label{eqapp:multi_round_correlated_states_def}
			\ket*{\Psi_{j_{k-l_c}^{k+l_c}}^{(k)}}_{T_k^{k+l_c}} = \bigotimes_{l = k}^{k+l_c} \ket*{\psi_{j_{l-l_c}^l}^{(l)}}_{T_l}.
		\end{equation}
		
		Suppose that, for every round $k$ and every fixed choice of past and future settings $(j_{k-l_c}^{k-1},j_{k+1}^{k+l_c})$, the family $\{\ket*{\Psi_{j_{k-l_c}^{k+l_c}}^{(k)}}_{T_k^{k+l_c}}\}_{j_k \in \mathcal{J}}$ has the same Gram matrix as an acceptable family of single-round states $\{\ket*{\varphi_{j}^{(j_{k-l_c}^{k-1},j_{k+1}^{k+l_c})}}_{T_k}\}_{j \in \mathcal{J}} \in \mathcal{S}$. Equivalently,  there exists an isometry
		\begin{equation}
			V_{T_k \to T_k^{k+l_c}}^{(j_{k-l_c}^{k-1},j_{k+1}^{k+l_c})}: \mathcal{H}_{T_k} \to \mathcal{H}_{T_k^{k+l_c}},
		\end{equation}
		such that
		\begin{equation}
			\label{eqapp:admissibility_condition}
			\left\{ \Big(V_{T_k \to T_k^{k+l_c}}^{(j_{k-l_c}^{k-1},j_{k+1}^{k+l_c})}\Big)^{\dagger}  \ket*{\Psi_{j_{k-l_c}^{k+l_c}}^{(k)}}_{T_k^{k+l_c}} \right \}_{j_k \in \mathcal{J}} \in \mathcal{S}.
		\end{equation}
		Then, partitioning the rounds $\{1,...,N\}$ into $(l_c+1)$ sets $I_w = \{k : k \equiv w \bmod{(l_c+1)}\}$ with $w=0,...,l_c$, the phase-error rate in the correlated scenario satisfies \cref{eq:general_eph_bound_correl_2}.
	\end{corollary}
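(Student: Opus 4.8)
The plan is to obtain \cref{cor:per_round} as a direct specialization of \cref{thm:main}, with all the work going into repackaging the per-round admissibility data into the global form that the theorem requires. First I would lift the single-round admissibility set $\mathcal{S}$ to a global admissibility set for $N$ rounds by taking products,
\begin{equation}
\mathcal{S}_N = \Big\{ \big\{\bigotimes_{k=1}^N \ket*{\psi_{j_k}^{(k)}}_{T_k}\big\}_{j_1^N \in \mathcal{J}^N} : \{\ket*{\psi_{j_k}^{(k)}}_{T_k}\}_{j_k \in \mathcal{J}} \in \mathcal{S}\ \forall k \Big\},
\end{equation}
i.e.\ the global families that factorize round-by-round into single-round families each lying in $\mathcal{S}$. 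With this choice, the hypothesis of \cref{thm:main}---that the uncorrelated bound \cref{eq:general_eph_bound_2} holds whenever the global family lies in $\mathcal{S}_N$---is immediate, since membership in $\mathcal{S}_N$ is precisely the per-round condition $\{\ket*{\psi^{(k)}_{j_k}}\}_{j_k} \in \mathcal{S}$ for all $k$ assumed in the corollary.

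The heart of the argument is assembling the global isometry $V_{T_{I_w}\to T_1^N}^{j_{I_{\bar w}}}$ required by \cref{thm:main} out of the per-round isometries $V_{T_k \to T_k^{k+l_c}}^{(j_{k-l_c}^{k-1},j_{k+1}^{k+l_c})}$ supplied by the corollary (whose existence follows from the stated Gram-matrix equality). The key combinatorial observation is that consecutive elements of $I_w = \{k : k \equiv w \bmod (l_c+1)\}$ are separated by exactly $l_c+1$, so the blocks $\{k,k+1,\dots,k+l_c\}$ with $k \in I_w$ are disjoint and tile $\{1,\dots,N\}$. Moreover, for each such $k$ the indices $k-l_c,\dots,k-1,k+1,\dots,k+l_c$ all lie in $I_{\bar w}$, so each per-round isometry depends only on $j_{I_{\bar w}}$. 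I would therefore set $V_{T_{I_w}\to T_1^N}^{j_{I_{\bar w}}} = \bigotimes_{k \in I_w} V_{T_k \to T_k^{k+l_c}}^{(j_{k-l_c}^{k-1},j_{k+1}^{k+l_c})}$, a tensor product of isometries on disjoint tiles, whose domain $\mathcal{H}_{T_{I_w}}$ and codomain $\bigotimes_{k\in I_w}\mathcal{H}_{T_k^{k+l_c}} = \mathcal{H}_{T_1^N}$ match by the tiling, and which is itself an isometry.

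It then remains to verify the correlated admissibility condition \cref{eq:global_admissibility_condition_correlated}. Because the tiles partition the rounds, the correlated global state factorizes as $\ket*{\Psi'_{j_1^N}}_{T_1^N} = \bigotimes_{k \in I_w} \ket*{\Psi_{j_{k-l_c}^{k+l_c}}^{(k)}}_{T_k^{k+l_c}}$, and the daggered tensor-product isometry acts block-by-block,
\begin{equation}
\big(V_{T_{I_w}\to T_1^N}^{j_{I_{\bar w}}}\big)^\dagger \ket*{\Psi'_{j_1^N}}_{T_1^N} = \bigotimes_{k \in I_w} \big(V_{T_k \to T_k^{k+l_c}}^{(j_{k-l_c}^{k-1},j_{k+1}^{k+l_c})}\big)^\dagger \ket*{\Psi_{j_{k-l_c}^{k+l_c}}^{(k)}}_{T_k^{k+l_c}}.
\end{equation}
For fixed $j_{I_{\bar w}}$, each factor depends only on $j_k$, and by the corollary's assumption \cref{eqapp:admissibility_condition} the single-round family $\{(V_{T_k \to T_k^{k+l_c}})^\dagger \ket*{\Psi^{(k)}_{j_{k-l_c}^{k+l_c}}}\}_{j_k \in \mathcal{J}}$ lies in $\mathcal{S}$. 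Hence the pulled-back family indexed by $j_{I_w} \in \mathcal{J}^{N^{(w)}}$ is a product of single-round families each in $\mathcal{S}$, which is exactly membership in $\mathcal{S}_{N^{(w)}}$. This establishes \cref{eq:global_admissibility_condition_correlated}, so \cref{thm:main} applies directly and delivers \cref{eq:general_eph_bound_correl_2}. I expect the main obstacle to be bookkeeping rather than conceptual: making the tiling, the factorization of $\ket*{\Psi'_{j_1^N}}$, and the setting-dependence of each per-round isometry consistent at the truncated blocks near $k=1$ and $k=N$, where the conventions of \cref{rem:boundaries} must be invoked so that domains, codomains, and index ranges still match up.
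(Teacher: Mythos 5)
Your overall route is the paper's route: specialize \cref{thm:main} by taking the global admissibility sets to consist of round-by-round product families with each single-round factor in $\mathcal{S}$, tile the rounds into blocks indexed by $k \in I_w$, build the global isometry as a tensor product of the per-round isometries, and observe that each per-round isometry depends only on settings in $I_{\bar w}$. All of that matches the paper's proof. However, your key combinatorial claim is false as stated: the blocks $\{k,\ldots,k+l_c\}$ with $k \in I_w$ do \emph{not} tile $\{1,\ldots,N\}$; they tile only $\{k^{(w)}_{\min},\ldots,N\}$, where $k^{(w)}_{\min} = \min I_w$ equals $w$ for $w \in \{1,\ldots,l_c\}$ and equals $l_c+1$ for $w=0$. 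So except for $w=1$, the rounds $1,\ldots,k^{(w)}_{\min}-1$ belong to no block. Consequently your factorization $\ket*{\Psi'_{j_1^N}}_{T_1^N} = \bigotimes_{k\in I_w}\ket*{\Psi^{(k)}_{j_{k-l_c}^{k+l_c}}}$ is missing the prefix factor $\bigotimes_{k=1}^{k^{(w)}_{\min}-1}\ket*{\psi^{(k)}_{j_{\max(1,k-l_c)}^{k}}}_{T_k}$, and your isometry $\bigotimes_{k\in I_w} V_{T_k\to T_k^{k+l_c}}$ does not even map into $\mathcal{H}_{T_1^N}$: its codomain is $\mathcal{H}_{T_{k^{(w)}_{\min}}^{N}}$. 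This is not repaired by the truncation conventions of \cref{rem:boundaries}, which only shorten blocks near $k=N$; the problem near $k=1$ requires changing the isometry itself.

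The fix, which is what the paper does, is to augment the global isometry with a ``state-preparation'' factor on the uncovered prefix: define $V_{T_{I_w}\to T_1^N}^{j_{I_{\bar w}}}$ as the tensor product of (i) the constant factor $\bigotimes_{k=1}^{k^{(w)}_{\min}-1}\ket*{\psi^{(k)}_{j_{\max(1,k-l_c)}^{k}}}_{T_k}$, which is a legitimate isometry (from the trivial space into the prefix systems) and depends only on $j_{I_{\bar w}}$ because every setting index it involves is strictly smaller than $k^{(w)}_{\min} = \min I_w$, and (ii) your per-round isometries on the blocks. Applying the adjoint of this augmented isometry to $\ket*{\Psi'_{j_1^N}}_{T_1^N}$, the prefix factors contribute unit inner products $\braket*{\psi^{(k)}_{j_{\max(1,k-l_c)}^{k}}}{\psi^{(k)}_{j_{\max(1,k-l_c)}^{k}}}=1$, and what remains is exactly the block-by-block expression you wrote; from there your membership argument in $\mathcal{S}_{N^{(w)}}$ goes through unchanged. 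With this one correction your proposal coincides with the paper's proof.
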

	
	\begin{remark}
		\label{rem:boundaries}
		In the statement of \cref{cor:per_round}, we use the shorthand 
		$j_{k-l_c}^{k+l_c} = j_{k-l_c} j_{k-l_c+1} ... j_{k+l_c} $ to denote setting sequences. For rounds near the boundaries 
		of the protocol (i.e., when $k-l_c < 1$ or $k+l_c > N$), these indices should be 
		understood as appropriately truncated to the range $[1,N]$, i.e., $j_{k-l_c}^{k+l_c} \equiv j_{\max(1,k-l_c)}^{\min(k+l_c,N)}$. The same is true for the sequence of systems $T_{k}^{k+l_c} \equiv T_{k}^{\min(k+l_c,N)}$. The proof of \cref{cor:per_round} below handles 
		these boundary cases explicitly.
	\end{remark}
	
	\begin{proof}
		We simply need to verify that the conditions of \cref{thm:main} are satisfied. Specifically, we must show that, for each $w \in \{0,1,...,l_c\}$, there exists a global isometry such that
		\begin{equation}
			\label{eq:global_admissibility_condition_correlated_2}
			\left\{\Big(V_{T_{I_w} \to T_1^N}^{j_{I_{\bar w}}}\Big)^{\dagger} \ket*{\Psi'_{j_1^N}}_{T_1^N} \right\}_{j_{I_w} \in \mathcal{J}^{N^{(w)}}} \in  \mathcal{S}_{N^{(w)}},
		\end{equation}
		
		where $\mathcal{S}_{N^{(w)}}$ is defined as:
		\begin{equation}
			\label{eq:S_Nw_def}
			\mathcal{S}_{N^{(w)}} = \left\{\bigg\{\bigotimes_{k \in I_w} \ket*{\varphi_{j_k}^{(k)}}_{T_k}\bigg\}_{j_{I_w} \in \mathcal{J}^{N^{(w)}}} : \Big\{\ket*{\varphi_{j_k}^{(k)}}_{T_k} \Big\}_{j_k \in \mathcal{J}} \in \mathcal{S}, \,\, \forall k \in I_w\right\}.
		\end{equation}
		Let $k_{\min}^{(w)} = \min I_w$ denote the smallest round index in partition $I_w$. Note that $k_{\min}^{(w)} = w$ for $w \in \{1,\ldots,l_c\}$ and $k_{\min}^{(0)} = l_c + 1$. Since consecutive elements of $I_w$ differ by exactly $l_c + 1$, the blocks $\{k, k+1, \ldots, \min(k+l_c,N)\}$ for $k \in I_w$ partition the rounds $\{k_{\min}^{(w)}, \ldots, N\}$. We can therefore write the global emitted state as
		\begin{equation}
			\begin{gathered}
				\ket*{\Psi'_{j_1^N}}_{T_1^N} = \bigotimes_{k=1}^N \ket*{\psi_{j_{\max(1,k-l_c)}^k}^{(k)}}_{T_k} = \left(\bigotimes_{k = 1}^{k_{\min}^{(w)}-1} \ket*{\psi_{j_{\max(1,k-l_c)}^{k}}^{(k)}}_{T_k}\right) \otimes \left(\bigotimes_{k \in I_w} \bigotimes_{m=k}^{\min(k+l_c,N)} \ket*{\psi_{j_{\max(1,m-l_c)}^{m}}^{(m)}}_{T_m}\right) \\
				= \left(\bigotimes_{k = 1}^{k_{\min}^{(w)}-1} \ket*{\psi_{j_{\max(1,k-l_c)}^{k}}^{(k)}}_{T_k}\right)\otimes \left(\bigotimes_{k \in I_w}  \ket*{\Psi_{j_{\max(1,k-l_c)}^{\min(k+l_c,N)}}^{(k)}}_{T_k^{\min(k+l_c,N)}}\right).
			\end{gathered}
		\end{equation}
		
		Now, define the global isometry for group $I_w$ conditional on $j_{I_{\bar w}}$ as:
		\begin{equation}
			V_{T_{I_w} \to T_1^N}^{j_{I_{\bar w}}} = \left(\bigotimes_{k = 1}^{k_{\min}^{(w)}-1} \ket*{\psi_{j_{\max(1,k-l_c)}^{k}}^{(k)}}_{T_k}\right)\otimes \left(\bigotimes_{k \in I_w} V_{T_k \to T_k^{\min(k+l_c,N)}}^{(j_{\max(1,k-l_c)}^{k-1},j_{k+1}^{\min(k+l_c,N)})}\right).
		\end{equation}
		
		This is well-defined because the per-round isometries act on disjoint subsystems. Moreover, for each $k \in I_w$, the indices $(j_{\max(1,k-l_c)}^{k-1},j_{k+1}^{\min(k+l_c,N)})$ lie entirely in $I_{\bar w}$. Similarly, for each $k \in \{1,\ldots,k_{\min}^{(w)}-1\}$, all indices in $\{\max(1,k-l_c),\ldots,k\}$ are strictly less than $k_{\min}^{(w)} = \min I_w$ and hence belong to $I_{\bar w}$. Therefore $V_{T_{I_w} \to T_1^N}^{j_{I_{\bar w}}}$ depends only on $j_{I_{\bar w}}$, as required by \cref{thm:main}.
		
		Applying the adjoint of the isometry to the global emitted state, the first tensor factors contribute $\prod_{k=1}^{k_{\min}^{(w)}-1} \braket*{\psi_{j_{\max(1,k-l_c)}^{k}}^{(k)}}{\psi_{j_{\max(1,k-l_c)}^{k}}^{(k)}} = 1$, and we obtain
		\begin{equation}
			\label{eq:derivations_adjoint}
			\begin{gathered}
				(V_{T_{I_w} \to T_1^N}^{j_{I_{\bar w}}})^\dagger \ket*{\Psi'_{j_1^N}}_{T_1^N} 
				= \bigotimes_{k \in I_w} \Big(V_{T_k \to T_k^{\min(k+l_c,N)}}^{(j_{\max(1,k-l_c)}^{k-1},j_{k+1}^{\min(k+l_c,N)})}\Big)^\dagger \ket*{\Psi_{j_{\max(1,k-l_c)}^{\min(k+l_c,N)}}^{(k)}}_{T_k^{\min(k+l_c,N)}} .
			\end{gathered}
		\end{equation}
		By assumption, for each $k \in I_w$, the family 
		\begin{equation}
			\label{eqapp:admissibility_condition_2}
			\left\{ \Big(V_{T_k \to T_k^{\min(k+l_c,N)}}^{(j_{\max(1,k-l_c)}^{k-1},j_{k+1}^{\min(k+l_c,N)})}\Big)^\dagger \ket*{\Psi_{j_{\max(1,k-l_c)}^{\min(k+l_c,N)}}^{(k)}}_{T_k^{\min(k+l_c,N)}}\right\}_{j_k \in \mathcal{J}} \in \mathcal{S},
		\end{equation}
		and therefore the family  $\{(V_{T_{I_w} \to T_1^N}^{j_{I_{\bar w}}})^{\dagger} \ket*{\Psi'_{j_1^N}}_{T_1^N}\}_{j_{I_w} \in \mathcal{J}^{N^{(w)}}} \in \mathcal{S}_{N^{(w)}}$, since it has the product form required by \cref{eq:S_Nw_def}, as we wanted to prove.
	\end{proof}

	\begin{corollary}[Fidelity bound to reference states]
		\label{corapp:fidelity_bound}
		Consider a prepare-and-measure QKD protocol with an uncorrelated source, and suppose there exists a set of reference states $\{\ket*{\phi_{j}}\}_{j \in \mathcal{J}}$ such that the phase-error bound in~\cref{eq:general_eph_bound_2} holds as long as
		\begin{equation}
			\abs*{\braket*{\phi_{j_k}}{\psi_{j_k}^{(k)}}_{T_k}}^2 \geq 1-\xi_{j_k}, \quad \forall k, \forall j_k \in \mathcal{J}.
		\end{equation}

		For an analogous protocol with a source with correlations up to length $l_c$, suppose that for every round $k$ and every choice of past and future settings $(j_{k-l_c}^{k-1},j_{k+1}^{k+l_c})$, there exists a family of states $\{\ket*{\phi_{j_{k-l_c}^{k+l_c}}^{(k)}}_{T_k}\}_{j_k \in \mathcal{J}}$ with the same Gram matrix as the family of reference states $\{\ket*{\phi_{j}}\}_j$, and that there exist states $\ket*{\lambda_{j_{k-l_c}^{k-1},j_{k+1}^{k+l_c}}^{(k)}}_{T_{k+1}^{k+l_c}}$ (independent of $j_k$) such that
		\begin{equation}
			\begin{gathered}
				\label{eq:fidelity_cor_assumption}
				\abs{\oldbluemath{\bra*{\phi_{j_{k-l_c}^{k+l_c}}^{(k)}}_{T_k}} \otimes \bra*{\lambda_{j_{k-l_c}^{k-1},j_{k+1}^{k+l_c}}^{(k)}}_{T_{k+1}^{k+l_c}} \ket*{\Psi_{j_{k-l_c}^{k+l_c}}^{(k)}}_{T_k^{k+l_c}}}^2 
				\geq 1-\xi_{j_k}, \quad \forall j_k.
			\end{gathered}
		\end{equation}
		
		Then, the phase-error rate bound in~\cref{eq:general_eph_bound_correl_2} holds for this correlated scenario.
	\end{corollary}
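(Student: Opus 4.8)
The plan is to reduce everything to \cref{cor:per_round}. Since the admissibility set here is the fidelity-based set $\mathcal{S}=\{\{\ket*{\psi_j}\}_{j}:\abs*{\braket*{\phi_j}{\psi_j}}^2\ge 1-\xi_j\ \forall j\}$, it suffices to verify, for each round $k$ and each fixed choice of surrounding settings $(j_{k-l_c}^{k-1},j_{k+1}^{k+l_c})$, that the multi-round family $\{\ket*{\Psi_{j_{k-l_c}^{k+l_c}}^{(k)}}_{T_k^{k+l_c}}\}_{j_k}$ has the same Gram matrix as some single-round family lying in $\mathcal{S}$. Once this hypothesis of \cref{cor:per_round} is established, its conclusion is exactly the desired \cref{eq:general_eph_bound_correl_2}.

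The key construction I would use is to lift the reference states into the multi-round space. Define $\ket*{\tilde\phi_{j_k}}_{T_k^{k+l_c}}\coloneqq\ket*{\phi_{j_{k-l_c}^{k+l_c}}^{(k)}}_{T_k}\otimes\ket*{\lambda_{j_{k-l_c}^{k-1},j_{k+1}^{k+l_c}}^{(k)}}_{T_{k+1}^{k+l_c}}$. Two properties make this the right object. First, because $\ket*{\lambda^{(k)}}$ is normalized and independent of $j_k$, it contributes a unit factor to every inner product between two members of the family, so the Gram matrix of $\{\ket*{\tilde\phi_{j_k}}\}_{j_k}$ coincides with that of $\{\ket*{\phi_{j_{k-l_c}^{k+l_c}}^{(k)}}\}_{j_k}$; combined with the assumption that the latter matches the Gram matrix of the genuine references $\{\ket*{\phi_j}\}_j$, it follows that $\{\ket*{\tilde\phi_{j_k}}\}_{j_k}$ also shares its Gram matrix with $\{\ket*{\phi_j}\}_j$. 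Second, the assumed bound \cref{eq:fidelity_cor_assumption} is precisely the statement $\abs*{\braket*{\tilde\phi_{j_k}}{\Psi_{j_{k-l_c}^{k+l_c}}^{(k)}}}^2\ge 1-\xi_{j_k}$.

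I would then rotate the lifted references onto the genuine ones: since $\{\ket*{\tilde\phi_{j_k}}\}_{j_k}$ and $\{\ket*{\phi_j}\}_j$ share a Gram matrix, there is an isometry $S$ with $S\ket*{\tilde\phi_{j_k}}=\ket*{\phi_{j_k}}$ (suitably embedded into the single-round space). Applying $S$ to the whole family yields $\{S\ket*{\Psi_{j_{k-l_c}^{k+l_c}}^{(k)}}\}_{j_k}$, which has the same Gram matrix as $\{\ket*{\Psi_{j_{k-l_c}^{k+l_c}}^{(k)}}\}_{j_k}$ (isometries preserve inner products) and satisfies $\abs*{\braket*{\phi_{j_k}}{S\Psi_{j_{k-l_c}^{k+l_c}}^{(k)}}}^2=\abs*{\braket*{\tilde\phi_{j_k}}{\Psi_{j_{k-l_c}^{k+l_c}}^{(k)}}}^2\ge 1-\xi_{j_k}$. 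Hence this rotated family is a normalized single-round family in $\mathcal{S}$ with the same Gram matrix as the multi-round family, which is exactly the hypothesis required by \cref{cor:per_round}. Boundary rounds are handled by truncating index ranges as in \cref{rem:boundaries}.

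The step I expect to demand the most care is the Hilbert-space and normalization bookkeeping of the previous paragraph. It is tempting to take the isometry required by \cref{eqapp:admissibility_condition} to be the map that simply appends $\ket*{\lambda^{(k)}}$, but its adjoint sends $\ket*{\Psi_{j_{k-l_c}^{k+l_c}}^{(k)}}$ to a generally \emph{sub}-normalized vector, so the resulting family would not lie in $\mathcal{S}$; the fix is precisely to route through the Gram-matrix rotation $S$, which is an honest isometry and therefore preserves normalization. The essential reason the argument closes is that $\ket*{\lambda^{(k)}}$ is both normalized and independent of $j_k$: normalization leaves the Gram matrix of the lifted references unchanged, while $j_k$-independence guarantees that $S$ (and hence the per-round isometry of \cref{eqapp:admissibility_condition}) depends only on $(j_{k-l_c}^{k-1},j_{k+1}^{k+l_c})$, as \cref{thm:main,cor:per_round} require.
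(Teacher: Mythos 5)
Your proposal is correct, and at its core it is the paper's own argument run in the opposite direction: both reduce to \cref{cor:per_round} using exactly the two ingredients you identify, namely the Gram-matrix equivalence between $\{\ket*{\phi_{j_{k-l_c}^{k+l_c}}^{(k)}}\}_{j_k}$ and the references $\{\ket*{\phi_j}\}_j$, and the $j_k$-independence of the normalized state $\ket*{\lambda^{(k)}}$, so that \cref{eq:fidelity_cor_assumption} becomes a fidelity bound against your lifted references $\ket*{\tilde\phi_{j_k}}$. The only structural difference is the direction of the map: the paper builds the small-to-big isometry $V$ with $V\ket*{\phi_{j_k}} = \ket*{\phi_{j_{k-l_c}^{k+l_c}}^{(k)}}\otimes\ket*{\lambda^{(k)}}$ (a Gram-matrix unitary composed with appending $\ket*{\lambda^{(k)}}$) and verifies the isometry form \cref{eqapp:admissibility_condition_3} through $\abs*{\bra*{\phi_{j_k}}V^\dagger\ket*{\Psi_{j_{k-l_c}^{k+l_c}}^{(k)}}}^2 = \abs*{\braket*{\tilde\phi_{j_k}}{\Psi_{j_{k-l_c}^{k+l_c}}^{(k)}}}^2 \geq 1-\xi_{j_k}$, whereas you push the multi-round states forward through a big-to-small map $S$ and verify the Gram-matrix form of the hypothesis directly.

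The normalization discussion is the one place you genuinely depart from the paper, and it cuts both ways. Your observation that the adjoint of the append-$\ket*{\lambda^{(k)}}$ isometry sub-normalizes $\ket*{\Psi_{j_{k-l_c}^{k+l_c}}^{(k)}}$ is accurate, and it applies verbatim to the paper's own $V$: the tail of $\ket*{\Psi_{j_{k-l_c}^{k+l_c}}^{(k)}}$ on $T_{k+1}^{k+l_c}$ depends on $j_k$ and generically differs from $\ket*{\lambda^{(k)}}$, so $\norm*{V^\dagger\ket*{\Psi_{j_{k-l_c}^{k+l_c}}^{(k)}}} < 1$. The paper tolerates this because its $\mathcal{S}$ is defined purely by the inner-product condition, even though the proof of \cref{thm:main} treats members of admissibility sets as normalized; on this point your version is the more careful one. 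However, your fix is not free of its own gloss: for $S$ to send $\ket*{\tilde\phi_{j_k}}\mapsto\ket*{\phi_{j_k}}$ \emph{and} preserve the Gram matrix of $\{\ket*{\Psi_{j_{k-l_c}^{k+l_c}}^{(k)}}\}_{j_k}$, it must act isometrically on the joint span of the two families, so its target must have dimension at least that of the joint span (up to $2\abs{\mathcal{J}}$). The phrase "suitably embedded into the single-round space" conceals this, and it fails if $T_k$ is taken literally as, say, a qubit system. It is harmless here only because fidelity-based admissibility sets of this kind place no dimension constraint on the emitted states --- the same implicit flexibility the paper's construction needs if one insists on normalized pullbacks (its $V$ must be enlarged so that its range contains the multi-round states) --- but your writeup should state this assumption explicitly rather than elide it.
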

	\begin{proof}
		The proof for the uncorrelated case defines the per-round admissibility set
		\begin{equation}
			\mathcal{S} = \Big\{\big\{\ket{\varphi_{j}}\big\}_{j \in \mathcal{J}} : \abs*{\braket*{\phi_{j}}{\varphi_{j}}}^2 \geq 1-\xi_{j}, \,\, \forall j \in \mathcal{J} \Big\}.     
		\end{equation}
		To apply \cref{cor:per_round}, we need to prove that for any fixed $(j_{k-l_c}^{k-1},j_{k+1}^{k+l_c})$, there exists an isometry such that
		\begin{equation}
			\label{eqapp:admissibility_condition_3}
			\left\{ \Big(V_{T_k \to T_k^{k+l_c}}^{(j_{k-l_c}^{k-1},j_{k+1}^{k+l_c})}\Big)^{\dagger}  \ket*{\Psi_{j_{k-l_c}^{k+l_c}}^{(k)}}_{T_k^{k+l_c}} \right\}_{j_k \in \mathcal{J}} \in \mathcal{S}.
		\end{equation}

		Note that, since the families $\{\ket*{\phi_{j_{k-l_c}^{k+l_c}}^{(k)}}_{T_k}\}_{j_k \in \mathcal{J}}$ and $\{\ket*{\phi_{j_k}}_{T_k}\}_{j_k \in \mathcal{J}}$ have the same Gram matrix by assumption, there must exist a unitary operation depending on $(j_{k-l_c}^{k-1},j_{k+1}^{k+l_c})$ that takes the latter family to the former. Moreover, trivially, there exists an isometry $T_k \to T_k^{k+l_c}$ depending on $(j_{k-l_c}^{k-1},j_{k+1}^{k+l_c})$ that takes an arbitrary state $\ket{\cdot}_{T_k}$ to $\ket{\cdot}_{T_k} \ket*{\lambda_{j_{k-l_c}^{k-1},j_{k+1}^{k+l_c}}^{(k)}}_{T_{k+1}^{k+l_c}}$. Combining these two, we obtain that, for each fixed $(j_{k-l_c}^{k-1},j_{k+1}^{k+l_c})$, there exists an isometry such that
		\begin{equation}
			\begin{gathered}  
				V_{T_k \to T_k^{k+l_c}}^{(j_{k-l_c}^{k-1},j_{k+1}^{k+l_c})} \ket{\phi_{j_k}}_{T_k} = \oldbluemath{\ket*{\phi_{j_{k-l_c}^{k+l_c}}^{(k)}}_{T_k}} \ket*{\lambda_{j_{k-l_c}^{k-1},j_{k+1}^{k+l_c}}^{(k)}}_{T_{k+1}^{k+l_c}}, \quad \forall j_k.
			\end{gathered}
		\end{equation}
		
		Moreover, note that
		\begin{equation}
			\begin{gathered} 
				\abs{\bra{\phi_{j_k}}_{T_k} \Big( V_{T_k \to T_k^{k+l_c}}^{(j_{k-l_c}^{k-1},j_{k+1}^{k+l_c})}\Big)^\dagger \ket*{\Psi_{j_{k-l_c}^{k+l_c}}^{(k)}}_{T_k^{k+l_c}}}^2
				= \abs{\oldbluemath{\bra*{\phi_{j_{k-l_c}^{k+l_c}}^{(k)}}_{T_k}}\otimes \bra*{\lambda_{j_{k-l_c}^{k-1},j_{k+1}^{k+l_c}}^{(k)}}_{T_{k+1}^{k+l_c}}\ket*{\Psi_{j_{k-l_c}^{k+l_c}}^{(k)}}_{T_k^{k+l_c}}}^2 
				\geq 1-\xi_{j_k},
			\end{gathered}
		\end{equation}
		where we have used \cref{eq:fidelity_cor_assumption}. This implies \cref{eqapp:admissibility_condition_3}, as we wanted to prove.
		
	\end{proof}
	
	\begin{lemma}[Unbounded correlations]
		\label{lem:unbounded_correlations}
		Consider a prepare-and-measure QKD protocol with a source exhibiting correlations of unbounded length, and let $\ket*{\Psi_N^{(\infty)}}_{A_1^N T_1^N}$ be the source-replacement state for this source. Also, let $\ket*{\Psi_N^{(l_c)}}_{A_1^N T_1^N}$ be the source-replacement state for a fictitious source with correlations up to length $l_c$. Suppose that the trace distance between these two states satisfies
		\begin{equation}
			T\Big(\ketbra*{\Psi_N^{(\infty)}}_{A_1^N T_1^N},\ketbra*{\Psi_N^{(l_c)}}_{A_1^N T_1^N}\Big) \leq d,
		\end{equation}
		and that, if Alice were to prepare $\ket*{\Psi_N^{(l_c)}}_{A_1^N T_1^N}$, then
		the following phase-error rate bound holds for any eavesdropping attack
		\begin{equation}
			\label{eqapp:eph_bound_fic}
			\Pr_{(l_c)}[\bm\eph > \mathcal{E}_\mathrm{ph} (\bm{\vec n};N,\epsilon)] \leq \epsilon.
		\end{equation}
		Then, if Alice prepares $\ket*{\Psi_N^{(\infty)}}_{A_1^N T_1^N}$, the following phase-error bound holds for any eavesdropping attack
		\begin{equation}
			\label{eqapp:eph_bound_act}
			\Pr_{(\infty)}[\bm\eph > \mathcal{E}_\mathrm{ph} (\bm{\vec n};N,\epsilon)] \leq \epsilon + d.
		\end{equation}
	\end{lemma}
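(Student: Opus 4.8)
The plan is to reuse the operator-level formulation of the phase-error bound already established in the proof of \cref{thm:main}. Fix an arbitrary eavesdropping attack, modeled as a CPTP map $\Phi_{T_1^N \to B_1^N}$ (Eve's isometry followed by tracing out her ancilla $E$), and let $M_{A_1^N B_1^N}^{>,\epsilon}$ be the two-outcome POVM element that checks whether $\bm\eph > \mathcal{E}_\mathrm{ph}(\bm{\vec n};N,\epsilon)$, exactly as in \cref{eq:uncorrelated_proof_guarantee}. The failure probabilities in the two scenarios are then measurement probabilities on the post-attack states,
\begin{equation}
  \Pr_{(\infty)}[\bm\eph > \mathcal{E}_\mathrm{ph}(\bm{\vec n};N,\epsilon)] = \Tr[M_{A_1^N B_1^N}^{>,\epsilon}\,\Phi_{T_1^N \to B_1^N}\big(\ketbra*{\Psi_N^{(\infty)}}_{A_1^N T_1^N}\big)],
\end{equation}
and identically with $\ket*{\Psi_N^{(l_c)}}$ in place of $\ket*{\Psi_N^{(\infty)}}$ for $\Pr_{(l_c)}$. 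The crucial observation is that both $\Phi$ and $M^{>,\epsilon}$ are the same positive operators in the two scenarios: Eve acts only on $T_1^N$ without knowledge of which source-replacement state Alice prepared, and the POVM element is fixed by Alice's and Bob's protocol actions (Steps 1--6 of the \emph{Phase-error estimation protocol} in \cref{app:sec_framework_eph}), which do not depend on the source.

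With this in place, the bound follows from two standard properties of the trace distance. First, by contractivity (the data-processing inequality) under the CPTP map $\Phi$,
\begin{equation}
  T\big(\Phi(\ketbra*{\Psi_N^{(\infty)}}),\Phi(\ketbra*{\Psi_N^{(l_c)}})\big) \leq T\big(\ketbra*{\Psi_N^{(\infty)}}_{A_1^N T_1^N},\ketbra*{\Psi_N^{(l_c)}}_{A_1^N T_1^N}\big) \leq d .
\end{equation}
Second, by the operational characterization of the trace distance, for any operator $0 \leq M \leq \mathbb{I}$ and any two states $\rho,\sigma$ one has $\Tr[M\rho] \leq \Tr[M\sigma] + T(\rho,\sigma)$. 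Applying this with $M = M^{>,\epsilon}$, $\rho = \Phi(\ketbra*{\Psi_N^{(\infty)}})$ and $\sigma = \Phi(\ketbra*{\Psi_N^{(l_c)}})$, and then invoking the assumed truncated-source bound~\cref{eqapp:eph_bound_fic}, gives
\begin{equation}
  \Pr_{(\infty)}[\bm\eph > \mathcal{E}_\mathrm{ph}(\bm{\vec n};N,\epsilon)] \leq \Pr_{(l_c)}[\bm\eph > \mathcal{E}_\mathrm{ph}(\bm{\vec n};N,\epsilon)] + d \leq \epsilon + d .
\end{equation}
Since the attack was arbitrary, this establishes \cref{eqapp:eph_bound_act} for every eavesdropping attack.

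I do not anticipate a genuine obstacle here; the argument is a clean two-line application of contractivity together with the operational meaning of the trace distance. The only point requiring care is the justification that the single pair $(\Phi, M^{>,\epsilon})$ is legitimate for both source-replacement states---that is, that no part of Eve's attack or of the phase-error measurement is permitted to depend on the emitted state itself. This is immediate in the source-replacement picture, where all of Alice's, Bob's, and Eve's subsequent operations act on registers of fixed dimension, but it is worth stating explicitly because the whole argument hinges on it. A secondary bookkeeping detail is that the measurement defining $\bm\eph$ and $\bm{\vec n}$ is applied identically in both runs, so that the ``bad'' event is captured by one and the same POVM element; this again follows directly from the construction in \cref{app:sec_framework_eph}.
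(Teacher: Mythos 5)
Your proposal is correct and follows essentially the same route as the paper's own proof: expressing both failure probabilities as $\Tr[M^{>,\epsilon}\,\Phi(\cdot)]$ for a fixed POVM element and attack map, then combining contractivity of the trace distance under $\Phi$ with the operational bound $\abs{\Tr[M(\rho-\sigma)]} \leq T(\rho,\sigma)$, and finally noting the attack was arbitrary. Your explicit remark that the pair $(\Phi, M^{>,\epsilon})$ is source-independent is a point the paper leaves implicit, but it is the same argument.
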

	
	\begin{proof}
		Consider a fixed attack by Eve, which can be described as a CPTP map $\Phi_{T_1^N \to B_1^N}$.
		Using the same reasoning as in the beginning of the proof of \cref{thm:main}, we can express the failure probability of the phase-error rate bound for each source-replacement state as
		\begin{equation}
			\label{eq:eph_bound_fic_M}
			\Pr_{(l_c)}[\bm\eph > \mathcal{E}_\mathrm{ph} (\bm{\vec n};N,\epsilon)]  = \Tr[M_{A_1^N B_1^N}^{>,\epsilon}\Phi_{T_1^N \to B_1^N} \Big(\ketbra*{\Psi_N^{(l_c)}}_{A_1^N T_1^N}\Big)].
		\end{equation}
		and
		\begin{equation}
			\label{eq:eph_bound_act_M}
			\Pr_{(\infty)}[\bm\eph > \mathcal{E}_\mathrm{ph} (\bm{\vec n};N,\epsilon)]  = \Tr[M_{A_1^N B_1^N}^{>,\epsilon}\Phi_{T_1^N \to B_1^N} \Big(\ketbra*{\Psi_N^{(\infty)}}_{A_1^N T_1^N}\Big)].
		\end{equation}
		where $M_{A_1^N B_1^N}^{>,\epsilon}$ is a POVM element. 
		
		Since the trace distance $T(\rho,\sigma) := \frac12\|\rho-\sigma\|_1$ is non-increasing under CPTP maps, we have that
		\begin{equation}
			\begin{aligned}
				&T\bigg(\Phi_{T_1^N \to B_1^N}\Big(\ketbra*{\Psi_N^{(\infty)}}_{A_1^N T_1^N}\Big),\Phi_{T_1^N \to B_1^N}\Big(\ketbra*{\Psi_N^{(l_c)}}_{A_1^N T_1^N}\Big)\bigg) \\
				&\leq T\Big(\ketbra*{\Psi_N^{(\infty)}}_{A_1^N T_1^N},\ketbra*{\Psi_N^{(l_c)}}_{A_1^N T_1^N}\Big) \leq d.
			\end{aligned}
		\end{equation}
		Moreover, for any POVM element $0 \le M \le \mathbb{I}$, $
		\big|\Tr[M(\rho-\sigma)]\big| \le T(\rho,\sigma)$. Therefore, we must have that
		\begin{equation}
			\Pr_{(\infty)}[\bm\eph > \mathcal{E}_\mathrm{ph} (\bm{\vec n};N,\epsilon)] \leq \Pr_{(l_c)}[\bm\eph > \mathcal{E}_\mathrm{ph} (\bm{\vec n};N,\epsilon)] + d \leq \epsilon + d,
		\end{equation}
		for the fixed CPTP map $\Phi_{T_1^N \to B_1^N}$. Since by assumption \cref{eqapp:eph_bound_fic} holds for any CPTP map $\Phi_{T_1^N \to B_1^N}$, then \cref{eqapp:eph_bound_act} must also hold for any CPTP map.
		
	\end{proof}

	\begin{remark}[Interference-based protocols]
		\label{rem:interference_based}
		All our previous results extend naturally to interference-based protocols (also known as MDI-type protocols), i.e., protocols in which Alice and Bob send quantum states to an untrusted middle node Charlie and classical announcements from Charlie determine the detected rounds. Concretely, one can define a source-replaced version of the actual protocol and an associated phase-error estimation protocol analogously to \cref{app:sec_framework_eph}; see, e.g., \cite{zhouNumericalMethod2022} for a general formulation.
		
		For such protocols, our framework can incorporate encoding correlations in both Alice's and Bob's transmitter. To apply our framework, one should define $\mathcal{J}=\mathcal{J}_A\times\mathcal{J}_B$ (the alphabet of setting combinations for both users), $j_k=(j_{A,k},j_{B,k})$ (the joint setting in round $k$) and $T_k=T_{A,k}T_{B,k}$ (the two optical systems emitted by Alice and Bob in round $k$). Then $\ket*{\psi^{(k)}_{j_1^k}}_{T_k}$ denotes the joint state emitted by Alice and Bob given the joint setting history $j_1^k$.
		
		The original (uncorrelated) security proof for the interference-based protocol should specify an admissibility set $\mathcal{S}_N$ (resp.\ $\mathcal{S}$) for the joint emitted states, typically including the tensor-product structure constraint
		$\ket*{\psi^{(k)}_{j_k}}_{T_k}=\ket*{\psi^{A,(k)}_{j_{A,k}}}_{T_{A,k}}\otimes \ket*{\psi^{B,(k)}_{j_{B,k}}}_{T_{B,k}}$.
		With these identifications, the statements of \cref{thm:main,cor:per_round,corapp:fidelity_bound,lem:unbounded_correlations} apply verbatim.
	\end{remark}

\end{document}